\theoremstyle{definition}
\newtheorem{theorem}{Theorem}[section]
\newtheorem{proposition}[theorem]{Proposition}
\newtheorem{lemma}[theorem]{Lemma}
\newtheorem{corollary}[theorem]{Corollary}
\newtheorem{notation}[theorem]{Notation}
\newtheorem{definition}[theorem]{Definition}
\newtheorem{example}[theorem]{Example}
\newtheorem{remark}[theorem]{Remark}
\newcommand{\supp}{\mathop{\text{supp}}}
\newcommand{\rk}{\mathop{\text{rk}}}
\newcommand{\Mat}{\mathop{\text{Mat}}}
\newcommand{\Tr}{\mathop{\text{Tr}}}
\newcommand{\colsp}{\mathop{\text{colsp}}}
\newcommand{\rowsp}{\mathop{\text{rowsp}}}
\newcommand{\F}{\mathbb{F}}
\newcommand{\mC}{\mathcal{C}}
\newcommand{\mA}{\mathcal{A}}
\newcommand{\mD}{\mathcal{D}}
\newcommand{\mB}{\mathcal{B}}
\newcommand{\mAA}{\textnormal{Ant}}
\newcommand{\cc}{\textnormal{c}}
\newcommand{\rr}{\textnormal{r}}
\newcommand{\dbot}{\mathbin{\text{$\bot\mkern-10mu\bot$}}}
\title{\textbf{Rank-Metric Codes and $q$-Polymatroids}}
\date{}
\author[1]{Elisa Gorla}
\author[2]{Relinde Jurrius}
\author[3]{Hiram H. L\'opez}
\author[4]{Alberto Ravagnani}
\affil[1]{Institut de Mat\'ematiques, Universit\'e de Neuch\^atel, Switzerland}
\affil[2]{Faculty of Military Science, Netherlands Defence Academy, The Netherlands}
\affil[3]{Department of Mathematics, Cleveland State University, USA}
\affil[4]{School of Mathematics \& Statistics, University College Dublin, Ireland}
\begin{document}

\maketitle

\begin{abstract}
This paper contributes to the study of rank-metric codes from an algebraic and combinatorial point of view. We introduce $q$-polymatroids, the $q$-analogue of polymatroids, and develop their basic properties. We associate a pair of $q$-polymatroids to a rank-metric codes and show that several invariants and structural properties of the code, such as generalized weights, the property of being MRD or an optimal anticode, and duality, are captured by the associated combinatorial object.
\end{abstract}

\bigskip

\bigskip


\section*{Introduction and Motivation}

Rank-metric codes were originally introduced by Delsarte~\cite{del1} and later rediscovered by Gabidulin~\cite{gabid} and Roth~\cite{roth}. Due to their application in network coding, the interest in these codes has intensified over the past years and many recent papers have been devoted to their study. While interest in these codes stems from practical applications, rank-metric codes also present interesting algebraic and combinatorial properties. Therefore, their mathematical structure has also been the object of several works. This paper belongs to the latter line of study. 
Our contributions are twofold: on the one side we study generalized weights of rank-metric codes, and on the other we establish a link with other combinatorial objects. More precisely, we associate to each rank-metric code a $q$-polymatroid, the $q$-analogue of a polymatroid, which we define here.

In Section~\ref{secRankMet} we define rank-metric codes and vector rank-metric codes. We recall how to associate a rank-metric code to a vector rank-metric code via the choice of a basis and establish a number of basic, but fundamental facts. In particular, we recall the notions of equivalence for rank-metric codes and vector rank-metric codes and we discuss in detail why these notions are compatible via the association mentioned above. We also explain that, while the choice of a basis affects the rank-metric code obtained via the association above, the equivalence class of the rank-metric code obtained does not depend on the choice of the basis.

Generalized weights have been defined and studied in different levels of generality by many researchers. Two of the first definitions of generalized weights for vector rank-metric codes are due to Kurihara, Matsumoto, and Uyematsu~\cite{KMU} and to Oggier and Sboui~\cite {OS}. More definitions are due to Jurrius and Pellikaan~\cite{Ondef} and Mart{\'\i}nez-Pe{\~n}as and Matsumoto~\cite{U6}, who also compared the various definitions. Using the theory of anticodes, Ravagnani~\cite{albgen} gave a definition of generalized weights for matrix rank-metric codes, which extends the one from~\cite{KMU}.

In this paper, we develop further the theory of generalized weights for rank-metric codes, tying together several previously known results on the subject. We adopt the definition from~\cite{albgen} and, in Section~\ref{anticodesweights}, we show that it is invariant with respect to equivalence of rank-metric codes. We also show that the definition of generalized weights for rank-metric codes from~\cite{U6}, which generalizes definitions from~\cite{Klove, Wei, Ondef}, is not invariant with respect to code equivalence. 

Given the well known link between codes in the Hamming metric and matroids, it is a natural question to ask if there is a $q$-analogue of this. Rank-metric codes can be viewed as the $q$-analogue of block-codes endowed with the Hamming metric. So it is natural to ask what is the $q$-analogue of a matroid. Crapo~\cite{Cr64} already studied this combinatorial object from the point of view of geometric lattices. Recently, Jurrius and Pellikaan~\cite{JP16} rediscovered $q$-matroids and associated a $q$-matroid to every vector rank-metric code. One goal of the current paper is extending this association to rank-metric codes.

With this in mind, we define the $q$-analogue of a polymatroid, that we call a $q$-polymatroid. In Section~\ref{secqPolym} we develop basic properties of $q$-polymatroids, such as equivalence and duality. In Section~\ref{secCodesPolym} we associate to every rank-metric code a pair of $q$-polymatroids. We also show that the $q$-polymatroids arising from rank-metric codes are in general not $q$-matroids. We then show that several structural properties of rank-metric codes depend only on the associated $q$-polymatroid: In Section~\ref{secStr} we do this for the minimum distance and the property of being MRD, in Section~\ref{secGenWtPoly} for the generalized weights and for the property of being an optimal anticode, and in Section~\ref{secDuality} for duality. These results are $q$-analogues of classical results in coding theory.

While preparing this manuscript, we became aware that a slightly different definition of $q$-analogue of a polymatroid was given independently by Shiromoto~\cite{Sh18}. While our paper applies this theory to equivalence of codes and to generalized weights,~\cite{Sh18} focuses on the weight enumerator of rank-metric codes.

\paragraph*{Acknowledgements.} 
Elisa Gorla was partially supported by the Swiss National Science Foundation through grant no. 200021\_150207.
Hiram H. L\'opez was partially supported by SNI, Mexico. Alberto Ravagnani was partially supported by the Swiss National Science Foundation through grant no. P2NEP2\_168527.

\bigskip

\bigskip

\section{Rank-Metric and Vector Rank-Metric Codes}\label{secRankMet}

We start by establishing the notation and the definitions used throughout the paper.

\begin{notation}
In the sequel, we fix integers $n,m\geq 2$ and a prime power $q$. For an integer $t$, we let $[t]:=\{1,...,t\}$.
We denote by $\F_q$ the finite field with $q$ elements. 
The space of $n\times m$ matrices with entries in $\mathbb{F}_q$ is denoted by $\Mat$. Up to transposition, we assume without loss of generality that $n\leq m$. 
We let $$\Mat(J,c)=\{M\in\Mat\mid\colsp(M)\subseteq J \} \quad\mbox{and} \quad \Mat(J,r)=\{M\in\Mat\mid\rowsp(M)\subseteq J \}.$$
\end{notation}

Throughout the paper, we only consider linear codes. All dimensions are computed over $\mathbb{F}_q$, unless otherwise stated.

\begin{definition}
A (\textbf{matrix}) \textbf{rank-metric code} is an $\F_q$-linear subspace $\mC \subseteq \Mat$. The \textbf{dual} of $\mC$ is 
$$\mathcal{C}^\perp=\{M \in\Mat \mid \Tr(MN^t)=0 \text{ for all }N\in\mathcal{C}\},$$
where $\Tr(\cdot)$ denotes the trace.
It is easy to check that $\mC^\perp$ is a code as well, i.e., that it is $\F_q$-linear. 
The \textbf{minimum} (\textbf{rank}) \textbf{distance} of a non-zero rank-metric code $\mC \subseteq \Mat$ is the integer $d(\mC):=\min\{\rk(M) \mid M \in \mC, \ M \neq 0\}$.
\end{definition}

The next bound relates the dimension of a code $\mC \subseteq \Mat$ to its minimum distance. It is the analogue for the rank metric of the Singleton bound from classical coding theory.

\begin{proposition}[\cite{del1}, Theorem~5.4]\label{Sing}
Let $\mC \subseteq \Mat$ be a non-zero rank-metric code with minimum distance $d$. Then $\dim(\mC) \le m (n-d+1)$.
\end{proposition}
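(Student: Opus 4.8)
The plan is to bound $\dim(\mC)$ by the codimension of a large subspace of $\Mat$ that meets $\mC$ only in $0$. Fix a set $R\subseteq[n]$ of $d-1$ row indices and let $\pi\colon\Mat\to\F_q^{(n-d+1)\times m}$ be the $\F_q$-linear map that deletes from a matrix the rows indexed by $R$. Then $\pi$ is surjective, and its kernel $V$ is the space of matrices all of whose nonzero entries lie in the rows indexed by $R$. Every such matrix has at most $d-1$ nonzero rows, hence rank at most $d-1$, and $\dim(V)=m(d-1)$.

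Next I would restrict $\pi$ to $\mC$. Its kernel is $\mC\cap V$, and any matrix $M$ in this intersection satisfies $\rk(M)\le d-1<d=d(\mC)$; since $\mC$ has minimum distance $d$, this forces $M=0$. Hence $\pi|_{\mC}$ is injective, and therefore $\dim(\mC)\le\dim\F_q^{(n-d+1)\times m}=m(n-d+1)$, which is the claim. Equivalently, one may phrase the argument intrinsically: if $J\subseteq\F_q^n$ has dimension $d-1$, then $\Mat(J,c)$ has dimension $m(d-1)$, consists of matrices of rank at most $d-1$, and meets $\mC$ only in $0$, so $\dim(\mC)\le nm-m(d-1)=m(n-d+1)$; this is the optimal-anticode flavour of the same estimate.

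I do not expect a genuine obstacle here: this is a direct puncturing estimate. The one subtlety is that the puncturing must be performed on rows (equivalently, that one uses $\Mat(J,c)$ rather than $\Mat(J,r)$): deleting $d-1$ columns by the identical reasoning yields only $\dim(\mC)\le n(m-d+1)$, which is weaker than the asserted bound precisely because $n\le m$, the gap between the two right-hand sides being $(d-1)(m-n)\ge0$. It is also worth recording that the statement is non-vacuous: since $\mC\neq 0$ and every matrix in $\Mat$ has rank at most $n$, we automatically have $1\le d\le n$, so $n-d+1\ge 1$ and the subspace $V$ above is a proper, nonzero subspace.
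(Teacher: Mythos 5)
Your argument is correct and complete. Note that the paper itself does not prove this proposition: it is quoted from Delsarte, whose original derivation goes through the machinery of association schemes and the distance distribution. Your puncturing argument is the standard elementary proof, and both of your formulations check out: the kernel of row-deletion is exactly $\Mat(J,\cc)$ for the coordinate subspace $J$ of dimension $d-1$, it has dimension $m(d-1)$, consists of matrices of rank at most $d-1$, and therefore meets $\mC$ only in $0$, giving $\dim(\mC)\le mn-m(d-1)=m(n-d+1)$. Your two side remarks are also worth keeping: the row/column asymmetry is precisely why the convention $n\le m$ is needed for the sharp form of the bound, and the spaces $\Mat(J,\cc)$ you use are exactly the optimal anticodes that reappear later in the paper in the definition of generalized weights, so your ``anticode flavour'' phrasing is the right way to see this bound in the context of this paper. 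One microscopic quibble: when $d=1$ the kernel $V$ is the zero subspace rather than a nonzero one, but in that case the claimed bound is the trivial $\dim(\mC)\le mn$ and the argument goes through unchanged.
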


\begin{definition}A code that attains the bound of Proposition~\ref{Sing} is called a \textbf{maximum rank distance} (\textbf{MRD}) code.
\end{definition}

We now introduce some transformations that preserve the dimension and the minimum rank distance of a rank-metric code.
These will play a central role throughout the paper.

\begin{notation}
Let $\mC \subseteq \Mat$ be a rank-metric code, let $A\in\mbox{GL}_n(\F_q)$ and $B\in\mbox{GL}_m(\F_q)$. Define
$$A\mC B:=\{AMB\mid M\in\mC\} \subseteq \Mat.$$
When $n=m$, define the \textbf{transpose} of a rank-metric code $\mC \subseteq \Mat$ as $$\mC^t:=\{M^t \mid M \in \mC\} \subseteq \Mat.$$
\end{notation}

As we are interested in structural properties of rank-metric codes, it is natural to study these objects up to equivalence. 
Linear isometries of the space of matrices of fixed size induce a natural notion of equivalence among rank-metric codes.

\begin{definition}
Two rank-metric codes $\mC_1,\mC_2\subseteq\Mat$ are \textbf{equivalent} if there exists an $\F_q$-linear isometry $f:\Mat \to \Mat$ such that $f(\mC_1)=\mC_2$. 
If this is the case, then we write $\mC_1 \sim \mC_2$.
\end{definition}

The next theorem gives a characterization of the linear isometries of $\Mat$. It combines results by Hua and Wan, and it can be found in the form stated below in~\cite[Theorem~3.4]{wan2}.

\begin{theorem}[\cite{hua}, \cite{wan}] \label{theoiso}
Let $f:\Mat \to \Mat$ be an $\F_q$-linear isometry with respect to the rank metric. 
\begin{enumerate}[label=(\arabic*)] \setlength\itemsep{0em}
\item If $m<n$, then there exist matrices $A \in \mbox{GL}_n(\F_q)$ and $B\in \mbox{GL}_m(\F_q)$ such that $f(M)=AMB$ for all $M \in \Mat$.
\item If $m=n$, then there exist matrices $A,B \in \mbox{GL}_m(\F_q)$ such that either $f(M)=AMB$ for all $M \in \Mat$, or
$f(M)=AM^tB$ for all $M \in \Mat$.
\end{enumerate}
\end{theorem}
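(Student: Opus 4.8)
The plan is to reconstruct the matrices $A$ and $B$ (or the transpose-version) from the way $f$ acts on rank-one matrices. First I would note that, since $f$ is a rank-metric isometry and $f(0)=0$, we have $\rk(f(M)) = \rk(f(M)-f(0)) = \rk(M-0) = \rk(M)$ for every $M$, so both $f$ and $f^{-1}$ preserve rank; in particular $f$ restricts to a bijection of the set of rank-one matrices, which is $\{xy^t : x \in \F_q^n \setminus \{0\},\ y \in \F_q^m \setminus \{0\}\}$. I would then prove the elementary criterion that, for rank-one matrices $R_1 = x_1 y_1^t$ and $R_2 = x_2 y_2^t$, one has $\rk(R_1 - R_2) \le 1$ if and only if $\colsp(R_1) = \colsp(R_2)$ or $\rowsp(R_1) = \rowsp(R_2)$: the "if" direction is immediate, and for "only if" one checks directly that $x_1 y_1^t - x_2 y_2^t$ has rank $2$ whenever $x_1 \notin \langle x_2\rangle$ and $y_1 \notin \langle y_2\rangle$. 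Because $f$ is $\F_q$-linear and rank-preserving, it preserves the relation "$\rk(R_1 - R_2) \le 1$" on rank-one matrices.

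The combinatorial heart of the argument is the classification of the maximal cliques of the graph on rank-one matrices defined by this relation. I claim these are exactly the sets $\mathcal{L}_U := \{M : \rk(M) = 1,\ \colsp(M) = U\}$ as $U$ runs over the lines of $\F_q^n$, together with the sets $\mathcal{R}_W := \{M : \rk(M) = 1,\ \rowsp(M) = W\}$ as $W$ runs over the lines of $\F_q^m$. Each $\mathcal{L}_U$ and $\mathcal{R}_W$ is visibly a clique; each is maximal precisely because $m,n \ge 2$; and analysing any two adjacent elements of an arbitrary clique shows that the clique is contained in one of them. Hence $f$ permutes the family $\{\mathcal{L}_U\}_U \cup \{\mathcal{R}_W\}_W$. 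I also record the cardinalities $|\mathcal{L}_U| = q^m - 1$ and $|\mathcal{R}_W| = q^n - 1$, which is what will let me distinguish the two types.

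Now I would normalise. Writing $E_{ij} = e_i e_j^t$ for the standard matrix units, put $f(E_{11}) = x_0 y_0^t$ and choose $A_0 \in \mathrm{GL}_n(\F_q)$ and $B_0 \in \mathrm{GL}_m(\F_q)$ with $A_0 e_1 = x_0$ and $B_0^t e_1 = y_0$; after replacing $f$ by $M \mapsto A_0^{-1} f(M) B_0^{-1}$ we may assume $f(E_{11}) = E_{11}$. The only maximal cliques containing $E_{11}$ are $\mathcal{L}_{\langle e_1\rangle}$ and $\mathcal{R}_{\langle e_1\rangle}$, and they are distinct, so $f$ permutes this pair; if it interchanges them, then $q^m-1 = q^n-1$, forcing $m = n$, and we may further replace $f$ by $M \mapsto f(M)^t$ so that $f$ fixes $\mathcal{L}_{\langle e_1\rangle}$. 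Then $f$ fixes setwise the $m$-dimensional space $\{e_1 y^t : y \in \F_q^m\}$ and the $n$-dimensional space $\{x e_1^t : x \in \F_q^n\}$, acting on them as $y \mapsto B_1 y$ and $x \mapsto A_1 x$ for some $B_1 \in \mathrm{GL}_m(\F_q)$ and $A_1 \in \mathrm{GL}_n(\F_q)$ fixing $e_1$; replacing $f$ once more by $M \mapsto A_1^{-1} f(M) B_1^{-t}$, we may assume $f(E_{1j}) = E_{1j}$ and $f(E_{i1}) = E_{i1}$ for all $i,j$. For $i,j \ge 2$, comparing $E_{ij}$ with $E_{i1}$ and with $E_{1j}$ (rank-distance $1$) and with $E_{11}$ (rank-distance $2$) via the criterion above pins down $\colsp(f(E_{ij})) = \langle e_i\rangle$ and $\rowsp(f(E_{ij})) = \langle e_j\rangle$, hence $f(E_{ij}) = \lambda_{ij} E_{ij}$ with $\lambda_{ij} \in \F_q^\times$; applying $f$ to the rank-one matrix $(e_1+e_i)(e_1+e_j)^t = E_{11} + E_{1j} + E_{i1} + E_{ij}$ and demanding the image have rank $1$ forces $\lambda_{ij} = 1$. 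Thus $f$ fixes the basis $\{E_{ij}\}$ of $\Mat$, so $f = \mathrm{id}$; undoing the substitutions yields $f(M) = AMB$ in general, and $f(M) = AM^tB$ in the case $m=n$ where a transpose was inserted.

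I expect the real obstacle to be the clique classification in the second step together with the bookkeeping of the successive substitutions in the third — in particular, inserting the transpose (available only when $m=n$) at exactly the right moment, and checking that each reduction does not spoil the normalisations already obtained. There is no deep difficulty, but the argument has several interlocking parts that must be kept consistent, and the small cases $n=2$ or $m=2$ deserve a quick separate check to confirm that the $\mathcal{L}_U$ and $\mathcal{R}_W$ are indeed the maximal cliques.
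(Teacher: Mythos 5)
The paper does not prove this statement at all: it is quoted from Hua and Wan (in the form given in \cite[Theorem~3.4]{wan2}), so there is no ``paper proof'' to compare against. Your argument is a correct, self-contained reconstruction of the classical proof of the fundamental theorem of the geometry of rectangular matrices, which is in fact the route taken in the cited sources: linearity plus isometry gives rank preservation, hence preservation of the adjacency relation $\rk(R_1-R_2)\le 1$ on rank-one matrices; the maximal adjacent sets are exactly the $\mathcal{L}_U$ and $\mathcal{R}_W$ (your maximality argument needs $m,n\ge 2$, which is the paper's standing assumption); the cardinality count $q^m-1$ versus $q^n-1$ rules out a column/row swap unless $m=n$; and the successive normalisations reduce to $f(E_{ij})=E_{ij}$, with the rank-one matrix $(e_1+e_i)(e_1+e_j)^t$ killing the residual scalars $\lambda_{ij}$ via the $2\times 2$ minor $\lambda_{ij}-1$. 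All the steps you flag as delicate do check out: a maximal clique with two elements of distinct column spaces must have constant row space, the span of $\mathcal{L}_{\langle e_1\rangle}$ is $f$-invariant because $f$ is linear and permutes that clique, and undoing the substitutions returns $f(M)=AMB$ or $f(M)=AM^tB$ with the transpose occurring only when $m=n$. I see no gap.
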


A class of codes that has recently received a lot of attention is that of vector rank-metric codes, introduced independently by Gabidulin and Roth in~\cite{gabid} and~\cite{roth}, respectively.

\begin{definition}
The $\textbf{rank weight}$ $\rk(v)$ of a vector $v \in \F_{q^m}^n$ is the dimension of the $\F_q$-linear space generated by its entries. A \textbf{vector rank-metric code} is an $\F_{q^m}$-linear subspace $C \subseteq \F_{q^m}^n$.  
The \textbf{dual} of $C$ is the vector rank-metric code $$C^{\dbot}:= \{v \in \F_{q^m}^n \mid \langle v, w\rangle =0 \mbox{ for all } w \in C \},$$ where $\langle \cdot , \cdot \rangle$ is the standard inner product of $\F_{q^m}^n$. 
When $C \neq \{0\}$ is a non-zero vector rank-metric code, the \textbf{minimum} (\textbf{rank}) \textbf{distance} of $C$ is  $d(C)=\min\{\rk(v)  \mid v \in C, \ v\neq 0\}$.
\end{definition}

\begin{notation}
Let $C\subseteq\F_{q^m}^n$ be a vector rank-metric code and $B\in\mbox{GL}_n(\F_q)$. Define
$$C B:=\{vB\mid v\in C\} \subseteq \F_{q^m}^n.$$
\end{notation}

Similarly to the case of rank-metric codes, the linear isometries of $\F_{q^m}^n$ induce a notion of equivalence for vector rank-metric codes.

\begin{definition}
Two vector rank-metric codes $C_1,C_2\subseteq\F_{q^m}^n$ are \textbf{equivalent} if there exists an $\F_{q^m}$-linear isometry $f:\F_{q^m}^n\to \F_{q^m}^n$ such that $f(C_1)=C_2$. 
If this is the case, then we write $C_1 \sim C_2$.
\end{definition}

The linear isometries of $\F_{q^m}^n$ can be described as follows.

\begin{theorem}[\cite{Berger}]\label{theoisovector}
Let $f:\F_{q^m}^n \to \F_{q^m}^n$ be an $\F_{q^m}^n$-linear isometry with respect to the rank metric. Then there exist $\alpha\in\F_{q^m}^*$ and $B\in\mbox{GL}_n(\F_q)$ such that 
$f(v)=\alpha v B$ for all $v\in\F_{q^m}^n$.
\end{theorem}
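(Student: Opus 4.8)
The plan is to pin down the behaviour of $f$ on vectors of rank $1$, in particular on the standard basis vectors, and then propagate this to all of $\F_{q^m}^n$ by $\F_{q^m}$-linearity. First I would record two elementary facts. Since $f$ is an isometry, $\rk(f(v))=\rk(f(v)-f(0))=\rk(v)$ for all $v$; in particular $f(v)=0$ iff $v=0$, so $f$ is injective, hence a bijection of the finite-dimensional space $\F_{q^m}^n$. Moreover, $u\in\F_{q^m}^n$ has $\rk(u)=1$ if and only if $u=\beta w$ for some $\beta\in\F_{q^m}^*$ and some nonzero $w\in\F_q^n$, since the entries of $u$ must then all lie in a single one-dimensional $\F_q$-subspace of $\F_{q^m}$, and conversely. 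Applying this to the standard basis vectors $e_1,\dots,e_n$, which have rank $1$, yields $f(e_i)=\alpha_i w_i$ with $\alpha_i\in\F_{q^m}^*$ and $w_i\in\F_q^n\setminus\{0\}$ for each $i$.

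The key step, and the one I expect to be the main obstacle, is to show that the scalars $\alpha_1,\dots,\alpha_n$ all agree up to a factor in $\F_q^*$. Fix $i\neq j$. Since $f$ is linear and bijective, $f(e_i)$ and $f(e_j)$ are $\F_{q^m}$-linearly independent, which forces $w_i$ and $w_j$ to be $\F_q$-linearly independent (otherwise one of $f(e_i),f(e_j)$ would be an $\F_{q^m}$-multiple of the other). Now $e_i+e_j$ has rank $1$, hence so does $f(e_i+e_j)=\alpha_i w_i+\alpha_j w_j$. On the other hand, the $2\times n$ matrix with rows $w_i$ and $w_j$ has rank $2$, so some two of its columns, say $k$ and $\ell$, form an invertible $2\times 2$ matrix; the $k$-th and $\ell$-th entries of $\alpha_i w_i+\alpha_j w_j$ are then an invertible $\F_q$-linear image of $(\alpha_i,\alpha_j)$, so the $\F_q$-span of the entries of $\alpha_i w_i+\alpha_j w_j$ contains $\F_q\alpha_i+\F_q\alpha_j$. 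Since that span is one-dimensional, we conclude $\alpha_j/\alpha_i\in\F_q^*$.

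Finally, set $\alpha:=\alpha_1$ and write $\alpha_i=c_i\alpha$ with $c_i\in\F_q^*$, so that $f(e_i)=\alpha\,(c_i w_i)$ with $c_iw_i\in\F_q^n\setminus\{0\}$. Let $B$ be the $n\times n$ matrix over $\F_q$ whose $i$-th row is $c_iw_i$; then $f(e_i)=\alpha e_iB$ for every $i$, and by $\F_{q^m}$-linearity $f(v)=\alpha vB$ for all $v\in\F_{q^m}^n$. That $B\in\mathrm{GL}_n(\F_q)$ is immediate: $v\mapsto vB$ is the composition of $f$ with multiplication by $\alpha^{-1}$, hence bijective, so $\det B\neq 0$. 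Apart from the rank-$1$ analysis of the middle paragraph, everything here is routine bookkeeping.
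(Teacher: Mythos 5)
The paper does not prove this statement at all: it is quoted as a known result and attributed to Berger's paper on isometries for the rank distance, so there is no in-paper argument to compare against. Your proof is correct and self-contained. The two elementary observations (rank preservation forces bijectivity; rank-one vectors are exactly the vectors of the form $\beta w$ with $\beta\in\F_{q^m}^*$ and $0\neq w\in\F_q^n$) are right, and the key step is sound: since $e_i+e_j$ has rank $1$, so does $\alpha_i w_i+\alpha_j w_j$, and extracting an invertible $2\times 2$ minor from the rank-$2$ matrix with rows $w_i,w_j$ shows that $\F_q\alpha_i+\F_q\alpha_j$ sits inside the one-dimensional entry span, forcing $\alpha_j/\alpha_i\in\F_q^*$. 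The assembly of $B$ from the rows $c_iw_i$ and the invertibility argument are routine and correct. (Note you implicitly use $n\ge 2$ when picking $i\neq j$, which the paper's standing notation guarantees; for $n=1$ the claim is trivial anyway.) This is essentially the standard direct argument one would expect, so nothing is lost relative to the cited source for the purposes of this paper.
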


There is a natural way to associate a rank-metric code $\mC$ to a vector rank-metric code $C$, in such a way that the metric properties are preserved. 
 Given an $\F_{q}$-basis $\Gamma=\{\gamma_1,...,\gamma_m\}$ of $\F_{q^m}$ and given a vector $v \in \F_{q^m}^n$, let $\Gamma(v)$ denote the unique $n \times m$ matrix with entries in $\F_q$ that satisfies
$$v_i=\sum_{j=1}^m \Gamma_{ij}(v) \gamma_j \quad \mbox{for all } 1 \le i \le n.$$

\begin{proposition}[\cite{costch}, Section 1] \label{lift}
The map $v \mapsto \Gamma(v)$ is an $\F_q$-linear isometry.  In particular,
 if $C \subseteq \F_{q^m}^n$ is a vector rank-metric code of dimension $k$ over $\F_{q^m}$, then $\Gamma(C)$ is an $\F_q$-linear rank-metric code of dimension $mk$ over $\F_q$. Moreover, if $C \neq \{0\}$, then $C$ and  $\Gamma(C)$ have the same minimum rank distance.
\end{proposition}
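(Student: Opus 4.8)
The plan is to establish the three assertions in order, each reducing to elementary linear algebra once the right picture is in place.

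First I would verify $\F_q$-linearity and the isometry property of $v \mapsto \Gamma(v)$. Linearity is immediate from the defining equations $v_i = \sum_j \Gamma_{ij}(v)\gamma_j$: since $\Gamma$ is an $\F_q$-basis, the coordinates $\Gamma_{ij}(v)$ depend $\F_q$-linearly on $v$, and uniqueness of the expansion gives $\Gamma(v+w) = \Gamma(v)+\Gamma(w)$ and $\Gamma(\lambda v)=\lambda\Gamma(v)$ for $\lambda \in \F_q$. For the isometry property I would show $\rk(v) = \rk(\Gamma(v))$ for every $v$. The key observation is that the $\F_q$-span of the entries $v_1,\dots,v_n$ of $v$, viewed inside $\F_{q^m}$, is carried by the coordinate isomorphism $\F_{q^m}\cong \F_q^m$ (with respect to $\Gamma$) precisely to the row space of $\Gamma(v)$: the $i$-th row of $\Gamma(v)$ is the coordinate vector of $v_i$. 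Hence $\rk(v) = \dim_{\F_q}\langle v_1,\dots,v_n\rangle = \dim_{\F_q}\rowsp(\Gamma(v)) = \rk(\Gamma(v))$. This also shows $\Gamma$ is injective, hence bijective onto $\Mat$ by dimension count, so it is indeed a linear isometry.

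Next, for the dimension statement: $\Gamma$ restricted to $C$ is an $\F_q$-linear injection (being an isometry, it has trivial kernel since $\rk(v)=0$ forces $v=0$), so $\dim_{\F_q}\Gamma(C) = \dim_{\F_q} C$. Now $C$ has dimension $k$ over $\F_{q^m}$, and since $[\F_{q^m}:\F_q]=m$, its dimension over $\F_q$ is $mk$; therefore $\dim_{\F_q}\Gamma(C) = mk$. One should note that $\Gamma(C)$ is genuinely an $\F_q$-linear subspace of $\Mat$ — it need not be $\F_{q^m}$-linear in any natural sense — which is exactly why rank-metric codes are the right ambient category. Finally, the statement about minimum distance is immediate from the isometry property already proved: for $C \neq \{0\}$,
$$d(C) = \min\{\rk(v) \mid v \in C, v \neq 0\} = \min\{\rk(\Gamma(v)) \mid v \in C, v \neq 0\} = \min\{\rk(M) \mid M \in \Gamma(C), M \neq 0\} = d(\Gamma(C)),$$
using that $\Gamma$ is a bijection from $C$ to $\Gamma(C)$ sending $0$ to $0$.

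I do not expect a serious obstacle here; the only point requiring a little care is the identification of $\rowsp(\Gamma(v))$ with the coordinate image of $\langle v_1,\dots,v_n\rangle_{\F_q}$, which hinges on the fact that passing to coordinates with respect to a fixed $\F_q$-basis is an $\F_q$-linear isomorphism $\F_{q^m}\xrightarrow{\sim}\F_q^m$ and therefore preserves dimensions of $\F_q$-subspaces. Everything else is bookkeeping with bases and the tower law $\dim_{\F_q} = m\cdot\dim_{\F_{q^m}}$.
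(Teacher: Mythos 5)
Your proof is correct and is essentially the standard argument that the paper delegates to the cited reference (\cite{costch}): linearity from uniqueness of coordinates, the isometry property via the identification of $\rowsp(\Gamma(v))$ with the coordinate image of the $\F_q$-span of the entries of $v$, and the dimension and minimum-distance claims as immediate consequences. Nothing is missing.
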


As one expects, the rank-metric codes obtained from equivalent vector rank-metric codes using different bases $\Gamma$ and $\Gamma'$ are equivalent.

%
%


\begin{proposition}\label{equiv}
Let $C_1,C_2\subseteq \F_{q^m}^n$ be vector rank-metric codes. Let $\Gamma$ and $\Gamma'$ be bases of $\F_{q^m}$ over $\F_q$. If $C_1\sim C_2$, then $\Gamma(C_1)\sim\Gamma'(C_2)$.
\end{proposition}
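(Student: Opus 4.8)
The plan is to transport, via the maps $v\mapsto\Gamma(v)$ and $v\mapsto\Gamma'(v)$ of Proposition~\ref{lift}, the vector-code isometry witnessing $C_1\sim C_2$ to a matrix-code isometry carrying $\Gamma(C_1)$ to $\Gamma'(C_2)$. The first point to record is that each of $\Gamma,\Gamma'\colon\F_{q^m}^n\to\Mat$ is not merely an $\F_q$-linear rank isometry (Proposition~\ref{lift}) but in fact an $\F_q$-linear \emph{bijection}: it is injective, because $\Gamma(v)=0$ forces every coordinate $v_i=\sum_{j}\Gamma_{ij}(v)\gamma_j$ to be zero, and $\dim_{\F_q}\F_{q^m}^n=mn=\dim_{\F_q}\Mat$. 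Hence $\Gamma^{-1}\colon\Mat\to\F_{q^m}^n$ exists and is again an $\F_q$-linear rank isometry, and likewise for $\Gamma'$.

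Since $C_1\sim C_2$, there is (by definition, and concretely by Theorem~\ref{theoisovector}) an $\F_{q^m}$-linear isometry $\phi\colon\F_{q^m}^n\to\F_{q^m}^n$, namely $\phi(v)=\alpha vB$ for some $\alpha\in\F_{q^m}^{*}$ and $B\in\mbox{GL}_n(\F_q)$, with $\phi(C_1)=C_2$; this $\phi$ is bijective, rank preserving and, being $\F_{q^m}$-linear, in particular $\F_q$-linear. I would then set
$$g:=\Gamma'\circ\phi\circ\Gamma^{-1}\colon\Mat\longrightarrow\Mat.$$
Being a composition of $\F_q$-linear bijections each of which preserves the rank metric, $g$ is an $\F_q$-linear isometry of $\Mat$, and
$$g\bigl(\Gamma(C_1)\bigr)=\Gamma'\bigl(\phi(\Gamma^{-1}(\Gamma(C_1)))\bigr)=\Gamma'\bigl(\phi(C_1)\bigr)=\Gamma'(C_2).$$
Hence $\Gamma(C_1)\sim\Gamma'(C_2)$, which is the claim.

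I do not foresee a real obstacle. The only steps needing an argument are that $\Gamma$ and $\Gamma'$ are bijections (so that $\Gamma^{-1}$ is available), which is the dimension count above, and that $\F_q$-linear rank isometries of a finite-dimensional space are closed under composition and inversion, which is immediate. If one wishes to avoid the abstract inverse, the identity can instead be obtained by a direct matrix computation: using $\Gamma(\alpha v)=\Gamma(v)A_\alpha$, $\Gamma(vB)=B^{t}\Gamma(v)$ and $\Gamma'(v)=\Gamma(v)Q$ for suitable $A_\alpha,Q\in\mbox{GL}_m(\F_q)$, one gets $\Gamma'(C_2)=B^{t}\,\Gamma(C_1)\,(A_\alpha Q)$, and since $M\mapsto B^{t}M(A_\alpha Q)$ is a rank isometry of $\Mat$ (cf.\ Theorem~\ref{theoiso}) this again yields $\Gamma(C_1)\sim\Gamma'(C_2)$. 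I would write up the conceptual version via $g$.
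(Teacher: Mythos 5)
Your proof is correct, and it takes a genuinely more conceptual route than the paper's. The paper first reduces to the case $\Gamma=\Gamma'$ by citing the fact that $\Gamma(C)\sim\Gamma'(C)$, and then carries out an explicit coordinate computation: writing $C_2=\alpha C_1B$ via Theorem~\ref{theoisovector}, it verifies $\Gamma(\alpha v)=\Gamma(v)Q$ with $Q=\Gamma(\alpha\gamma_1,\ldots,\alpha\gamma_m)\in\mbox{GL}_m(\F_q)$ and $\Gamma(vB)=B^t\Gamma(v)$, so that $\Gamma(C_2)=B^t\,\Gamma(C_1)\,Q$ is visibly equivalent to $\Gamma(C_1)$. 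You instead exploit that the definition of equivalence in $\Mat$ only asks for the existence of \emph{some} $\F_q$-linear isometry, so it suffices to exhibit $g=\Gamma'\circ\phi\circ\Gamma^{-1}$; your justification that $\Gamma,\Gamma'$ are rank-preserving $\F_q$-linear bijections (injectivity plus the count $mn=\dim_{\F_q}\Mat$) and that linear rank-preserving bijections compose to linear isometries is all that is needed, and it is sound. This absorbs the change of basis from $\Gamma$ to $\Gamma'$ for free --- your special case $\phi=\mathrm{id}$ reproves the cited fact $\Gamma(C)\sim\Gamma'(C)$ rather than assuming it --- and avoids all coordinates. What the paper's computation buys in exchange is the explicit shape $M\mapsto B^tMQ$ of the equivalence, which is the kind of concrete datum used elsewhere (e.g., in the proof of Proposition~\ref{equivpoly}); if you ever need that shape you would have to recover it from $g$ via Theorem~\ref{theoiso}. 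Your fallback computation in the final paragraph is essentially the paper's argument, so you could cite either version.
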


\begin{proof}
By~\cite[Lemma 27.2]{albgen}, $\Gamma(C)\sim\Gamma'(C)$. Hence
we may assume without loss of generality that $\Gamma=\Gamma'=\{\gamma_1,\ldots,\gamma_m\}$. 
By definition of $\Gamma$ 
$$\gamma_k\gamma_j=\sum_{\ell=1}^m\Gamma(\gamma_k\gamma_1,\ldots,\gamma_k\gamma_m)_{j\ell}\gamma_\ell$$
If $C_1\sim C_2$, then by Theorem~\ref{theoisovector} there exist $\alpha\in\F_{q^m}^*$ and $B=(b_{ij})\in\mbox{GL}_n(\F_q)$ such that $C_2=\alpha C_1 B$. 
If $v=(v_1,\ldots,v_n)\in\F_{q^m}^n$, then $$\alpha v_i=\sum_{h=1}^m \Gamma(v)_{ih}\alpha\gamma_h=
\sum_{j=1}^m\sum_{h=1}^m \Gamma(v)_{ih}\Gamma(\alpha\gamma_1,\ldots,\alpha\gamma_m)_{hj}\gamma_j=\sum_{j=1}^m  (\Gamma(v)\Gamma(\alpha\gamma_1,\ldots,\alpha\gamma_m))_{ij}\gamma_j.$$
Therefore $$\Gamma(\alpha v)=\Gamma(v)\Gamma(\alpha\gamma_1,\ldots,\alpha\gamma_m).$$
On the other side $$vB=\left(\sum_{k=1}^n b_{k1}v_k,\ldots,\sum_{k=1}^n b_{kn}v_k\right)=\left(\sum_{k=1}^n\sum_{j=1}^m b_{k1}\Gamma(v)_{kj}\gamma_j,\ldots,\sum_{k=1}^n\sum_{j=1}^m b_{kn}\Gamma(v)_{kj}\gamma_j\right),$$ hence $$\Gamma(vB)_{ij}=\sum_{k=1}^n b_{ki}\Gamma(v)_{kj}=(B^t\Gamma(v))_{ij},$$
that is $$\Gamma(vB)=B^t\Gamma(v).$$
Then for every $v\in C_1$ we obtain $\Gamma(\alpha vB)=B^t\Gamma(v)\Gamma(\alpha\gamma_1,\ldots,\alpha\gamma_m)$ i.e. $$\Gamma(C_2)=\Gamma(\alpha C_1 B)=B^t\Gamma(C_1)\Gamma(\alpha\gamma_1,\ldots,\alpha\gamma_m)\sim \Gamma(C_1)$$
since $B^t\in\mbox{GL}_n(\F_q)$ and $\rk(\Gamma(\alpha\gamma_1,\ldots,\alpha\gamma_m))=\rk(\alpha\gamma_1,\ldots,\alpha\gamma_m)=\rk(\gamma_1,\ldots,\gamma_m)=m$, hence
$\Gamma(\alpha\gamma_1,\ldots,\alpha\gamma_m)\in\mbox{GL}_m(\F_q)$.\end{proof}

Proposition~\ref{equiv}  suggests a natural definition of $\F_{q^m}$-linear rank-metric code in the $\F_q$-linear matrix space $\Mat$.

\begin{definition}
Let $\mC \subseteq \Mat$ be a rank-metric code. We say that $\mC$ is \textbf{$\F_{q^m}$-linear} if there exists a vector rank-metric code
$C \subseteq \F_{q^m}^n$ and a basis of $\Gamma$ of $\F_{q^m}$ over $\F_q$ such that $\mC \sim \Gamma(C)$.
\end{definition}

\section{Optimal Anticodes and Generalized Weights}\label{anticodesweights}

Optimal linear anticodes were introduced in~\cite{albgen} with the purpose of studying generalized weights in the rank metric.

\begin{definition}
The \textbf{maximum rank} of a rank-metric code $\mC \subseteq \Mat$ is $$\mbox{maxrk}(\mC):=\max \{ \rk(M) \mid M \in \mC\}.$$ A rank-metric code $\mA \subseteq \Mat$ is an
\textbf{optimal anticode} if $\dim(\mA)=m \cdot  \mbox{maxrk}(\mA)$.
\end{definition}

The class of optimal anticodes is closed with respect to duality~\cite[Theorem 54]{RR15} and code equivalence.
The properties of optimal anticodes were exploited in~\cite{albgen} to study a class of algebraic invariants of rank-metric codes, called (Delsarte) generalized weights.  

\begin{definition} \label{defalberto}
Let $\mC \subseteq \Mat$ be a non-zero code. For $i\geq 1$, the $i$-th \textbf{generalized weight} of $\mC$ is
$$a_i(\mC):= \frac{1}{m} \min\{\dim(\mA) \mid \mA \subseteq \Mat \mbox{ is an optimal anticode,} \dim(\mC \cap \mA) \ge i\}.$$
\end{definition}

\begin{remark} \label{rem:dist}
$a_1(\mC)$ is the minimum rank distance of $\mC$. See~\cite[Theorem 30]{albgen} for details.
\end{remark}

As one may expect, equivalent codes have the same generalized weights.

\begin{proposition} \label{compati}
Let $\mC_1,\mC_2 \subseteq \Mat$ be non-zero codes and assume $\mC_1 \sim \mC_2$. Then
$$a_i(\mC_1)=a_i(\mC_2) \mbox{ for every integer } i \ge 1.$$ 
\end{proposition}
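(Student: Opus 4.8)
The plan is to show that an $\F_q$-linear isometry $f:\Mat\to\Mat$ induces a bijection on the set of optimal anticodes that preserves dimension, and that it commutes with intersection in the relevant sense. Since $a_i(\mC)$ is defined purely in terms of dimensions of optimal anticodes and of intersections $\mC\cap\mA$, this will immediately give the invariance. Concretely, suppose $\mC_1\sim\mC_2$, so there is an $\F_q$-linear isometry $f$ with $f(\mC_1)=\mC_2$.

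The first step is to observe that $f$ is in particular an $\F_q$-linear bijection of $\Mat$, hence $\dim f(\mathcal{V})=\dim\mathcal{V}$ for every subspace $\mathcal{V}\subseteq\Mat$, and $f(\mathcal{U}\cap\mathcal{V})=f(\mathcal{U})\cap f(\mathcal{V})$. The second step is to check that $f$ sends optimal anticodes to optimal anticodes: because $f$ is a rank isometry, $\rk(f(M))=\rk(M)$ for all $M$, so $\mathrm{maxrk}(f(\mA))=\mathrm{maxrk}(\mA)$; combined with $\dim f(\mA)=\dim\mA$, the defining equality $\dim(\mA)=m\cdot\mathrm{maxrk}(\mA)$ is preserved, and the same argument applied to $f^{-1}$ (also a linear isometry by Theorem~\ref{theoiso}, or simply because the inverse of an isometry is an isometry) shows $\mA\mapsto f(\mA)$ is a bijection on the set of optimal anticodes.

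With these two facts in hand, the third step is a direct matching of the two minima. Given any optimal anticode $\mA$ with $\dim(\mC_1\cap\mA)\ge i$, set $\mA':=f(\mA)$; then $\mA'$ is an optimal anticode of the same dimension, and $\dim(\mC_2\cap\mA')=\dim(f(\mC_1)\cap f(\mA))=\dim f(\mC_1\cap\mA)=\dim(\mC_1\cap\mA)\ge i$. Hence every competitor in the minimum defining $a_i(\mC_1)$ yields a competitor of the same dimension in the minimum defining $a_i(\mC_2)$, so $a_i(\mC_2)\le a_i(\mC_1)$. Applying the identical argument with $f^{-1}$ in place of $f$ gives the reverse inequality, and therefore $a_i(\mC_1)=a_i(\mC_2)$ for all $i\ge1$.

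There is no real obstacle here: the only point requiring a line of justification is that the image of an optimal anticode under a linear isometry is again an optimal anticode, which is exactly the content of the remark preceding Definition~\ref{defalberto} that the class of optimal anticodes is closed under code equivalence; everything else is the elementary fact that a linear bijection preserves dimensions and commutes with intersections. One could phrase the whole proof in two sentences by invoking that closure property, but it seems cleaner to spell out the bijective correspondence between the two families of optimal anticodes.
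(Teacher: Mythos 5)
Your proof is correct and follows essentially the same strategy as the paper's: both arguments exhibit a dimension-preserving bijection on the set of optimal anticodes that is compatible with intersecting against the code, and then match the two minima in Definition~\ref{defalberto}. The only (harmless) difference is that the paper invokes the explicit classification of isometries from Theorem~\ref{theoiso} and writes the bijection as $\mA \mapsto A\mA^t B$, whereas you derive the needed facts (preservation of rank, dimension, and intersections) directly from the abstract definition of a linear isometry, which slightly streamlines the case analysis.
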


\begin{proof}
Since $\mC_1 \sim \mC_2$, there exist $A \in \mbox{GL}_n(\F_q)$ and $B \in \mbox{GL}_m(\F_q)$ such that either $\mC_2=A\mC_1B$, or $\mC_2=A\mC_1^tB$ and $n=m$. 
We prove the proposition in the second case, as the proof in the first is similar.

Let $\mAA(\Mat)$ denote the set of optimal anticodes in $\Mat$, and fix a positive integer $i$. The chain of equalities
$$A(\mA \cap \mC_1)^tB = (A\mA^tB) \cap (A \mC_1^tB) =(A\mA^tB) \cap \mC_2$$ implies that 
the isometry $f: \mAA(\Mat) \to \mAA(\Mat)$ defined by $f(\mA):=A\mA^tB$ gives a bijection between the anticodes $\mA \subseteq \Mat$ such that $\dim(\mA \cap \mC_1) \ge i$ and the 
anticodes $\mB \subseteq \Mat$ such that $\dim(\mB \cap \mC_2) \ge i$. Then $\mC_1$ and $\mC_2$ have the same generalized weights by definition. 
\end{proof}

The definition of generalized weights in terms of anticodes suggests the following natural questions.
Let $\mC \subseteq \Mat$ be a rank-metric code, and let $\mA$ be an optimal anticode such that 
$\dim(\mC\cap\mA)\geq i$ and $a_i(\mC)=\dim(\mA)/m$. 
\begin{enumerate}[label=(\arabic*)]\setlength\itemsep{0em}
\item Can one find an optimal anticode $\mA^\prime$ such that $\mA\subseteq \mA^\prime$, $\dim(\mC\cap\mA^\prime)\geq i+1$, and $a_{i+1}(\mC)=\dim(\mA^\prime)/m$?
\item Can one find an optimal anticode $\mA^{\prime\prime}$ such that $\mA^{\prime\prime}\subseteq \mA$, $\dim(\mC\cap\mA^{\prime\prime})\geq i-1$, and $a_{i-1}(\mC)=\dim(\mA^{\prime\prime})/m$?
\end{enumerate}

The following example shows that the answer to both questions is negative. 
%

\begin{example}
Let $q=2$ and $n=m=3$. Let $\mC$ be the rank-metric code generated by the three independent matrices
$$M_1:=\begin{pmatrix} 1 & 0 & 0 \\ 0 & 0 & 0\\ 0 & 0 & 0\end{pmatrix}, \qquad
M_2:=\begin{pmatrix} 0 & 0 & 0  \\ 0 & 1 & 1 \\ 0 & 0 & 1\end{pmatrix}, \qquad
M_3:=\begin{pmatrix} 0 & 0 & 0  \\ 0 & 0 & 1 \\ 0 & 1 & 0\end{pmatrix}.$$
It is easy to check that $a_{1}(\mC)=1$ and $a_{2}(\mC)=2$.
By~\cite[Theorems 4 and 6]{pazzis}, the optimal anticodes in $\Mat_{3\times 3}(\F_2)$ are of the form $\Mat(J,\cc)$ or $\Mat(J,\rr)$ for some $J\subseteq\F_2^3$.
Let $\mA_1$ be an optimal anticode of dimension $3$ with $\dim(\mC\cap\mA_1)\geq 1$. 
Then we have $\mA_1=\Mat(\left<(1,0,0)\right>,\cc)$ or $\mA_1=\Mat(\left<(1,0,0)\right>,\rr)$. 
Let $\mA_2$ be an optimal anticode of dimension $6$ with $\dim(\mC\cap\mA_2)\geq 2$. Then we have $\mA_2=\Mat(\left<(0,1,0),(0,0,1)\right>,\cc)$ or $\mA_2=\Mat(\left<(0,1,0),(0,0,1)\right>,\rr).$
\end{example}

Notice that one could also define generalized weights for rank-metric codes following a support-based analogy with codes endowed with the Hamming metric. This naturally leads to generalizing the invariants proposed in~\cite{Klove}, \cite{Wei} and~\cite{Ondef} as in the following Definition~\ref{defaltra}. This approach has been followed, e.g., in~\cite{U6}. Notice that in~\cite{U6} supports are defined as column spaces also in the case when $n>m$.

\begin{definition} \label{defaltra}
Let $\mC \subseteq \Mat$ be a non-zero code. 
The \textbf{support} of a subcode $\mD \subseteq \mC$ is
$$\supp(\mD):=\sum_{M \in \mD} \colsp(M) \ \subseteq \F_q^n,$$
where the sum is the sum of vector spaces. The $i$-th \textbf{support weight} of $\mC$ is 
$$cs_i(\mC):=\min \{\dim(\supp(\mD)) \mid \mD \subseteq \mC, \ \dim(\mD)=i\}.$$
\end{definition}

\begin{remark}
Although Definition~\ref{defaltra} produces an interesting and well-behaved algebraic invariant, we observe that the analogue of Proposition~\ref{compati} does not hold for support weights. In other words, while equivalent codes always have the same generalized weights, they might not have the same support weights. We illustrate this in the following example.
\end{remark}

\begin{example} \label{exnot}
Let $\mC$ be the binary code defined by
$$\mC:=\left\{ \begin{pmatrix} a & a \\ b & b \end{pmatrix}  \mid  a,b \in \F_2 \right\}.$$
Then $\mC$ is an optimal anticode of dimension 2. Therefore
$a_2(\mC)=1$. On the other hand, $\supp(\mC)=\F_2^2$, hence 
$cs_2(\mC)=2 \neq a_2(\mC)$.
Now observe that $\mC \sim \mC^t$.
In particular, $a_2(\mC)=a_2(\mC^t)=1$. However,
$cs_2(\mC)=2$, while $cs_2(\mC^t)=1$.
\end{example}

Generalized weights and support weights relate to each other as follows.

\begin{proposition}[\cite{U6}, Theorem 9]\label{pr:relate}
Let $\mC \subseteq \Mat$ be a non-zero code, and let $i \ge 1$ be an integer. If $m>n$, then $a_i(\mC)=cs_i(\mC)$.
If $m=n$, then $a_i(\mC) \le cs_i(\mC)$.
\end{proposition}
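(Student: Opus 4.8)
The plan is to prove the two inequalities of Proposition~\ref{pr:relate} separately and then observe that they combine to give the equality when $m>n$.

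First I would establish $a_i(\mC) \le cs_i(\mC)$ for all $n \le m$. The key observation is that for any subspace $J \subseteq \F_q^n$, the space $\Mat(J,\cc)$ is an optimal anticode: every matrix in it has column space inside $J$, so maxrk$(\Mat(J,\cc)) = \dim(J)$, and $\dim(\Mat(J,\cc)) = m\dim(J) = m\cdot\textnormal{maxrk}(\Mat(J,\cc))$. Now fix $i \ge 1$ and choose a subcode $\mD \subseteq \mC$ with $\dim(\mD) = i$ and $\dim(\supp(\mD)) = cs_i(\mC)$. Set $J := \supp(\mD)$ and $\mA := \Mat(J,\cc)$. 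By construction every $M \in \mD$ has $\colsp(M) \subseteq J$, so $\mD \subseteq \mC \cap \mA$, whence $\dim(\mC \cap \mA) \ge i$. Then by Definition~\ref{defalberto}, $a_i(\mC) \le \dim(\mA)/m = \dim(J) = cs_i(\mC)$.

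Next I would prove the reverse inequality $a_i(\mC) \ge cs_i(\mC)$ under the hypothesis $m > n$. Here one needs the classification of optimal anticodes in $\Mat$ when $m > n$ — namely, by~\cite{pazzis} (or the relevant result referenced in the paper) every optimal anticode is of the form $\Mat(J,\cc)$ for some $J \subseteq \F_q^n$; note that when $m>n$ the "row type" anticodes $\Mat(J,\rr)$ cannot be optimal in the same way, which is precisely why the equality fails for $m=n$. Given this, let $\mA$ be an optimal anticode achieving $a_i(\mC) = \dim(\mA)/m$ with $\dim(\mC \cap \mA) \ge i$; write $\mA = \Mat(J,\cc)$ with $\dim(J) = a_i(\mC)$. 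Pick any subcode $\mD \subseteq \mC \cap \mA$ with $\dim(\mD) = i$. Since every $M \in \mD \subseteq \mA$ satisfies $\colsp(M) \subseteq J$, we get $\supp(\mD) = \sum_{M \in \mD}\colsp(M) \subseteq J$, so $\dim(\supp(\mD)) \le \dim(J) = a_i(\mC)$, and therefore $cs_i(\mC) \le a_i(\mC)$. Combining the two inequalities gives equality for $m>n$, and we already have $\le$ unconditionally.

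The main obstacle is the use of the classification of optimal anticodes in the second part: one must be careful that in the regime $m>n$ the only optimal anticodes are the column-type spaces $\Mat(J,\cc)$, so that the anticode realizing $a_i(\mC)$ can always be taken of this form. This is exactly the asymmetry that breaks when $n=m$ (where transpose-type anticodes $\Mat(J,\rr)$ also appear and can have smaller associated support), as illustrated by Example~\ref{exnot}; so the proof of the $\ge$ direction genuinely depends on $m>n$ and cannot be salvaged in the square case. The first inequality, by contrast, is robust and holds for all $n \le m$ because $\Mat(\supp(\mD),\cc)$ is always an admissible competitor in the minimization defining $a_i(\mC)$.
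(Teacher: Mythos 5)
Your argument is correct. Note that the paper itself gives no proof of this proposition: it is imported as a citation of \cite[Theorem~9]{U6}, so there is no in-text argument to compare against. Your two-step proof is sound and self-contained: the inequality $a_i(\mC)\le cs_i(\mC)$ follows because $\Mat(\supp(\mD),\cc)$ is always an optimal anticode of dimension $m\cdot cs_i(\mC)$ meeting $\mC$ in dimension at least $i$ (this uses $\dim(\Mat(J,\cc))=m\dim(J)$, i.e.\ \cite[Lemma~26]{RR15}, together with $\dim(J)\le n\le m$), and the reverse inequality for $m>n$ follows from the classification of optimal anticodes in \cite[Theorem~6]{pazzis}, which is exactly the ingredient the paper itself invokes in the proof of Theorem~\ref{03-23-18}. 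Your remark on why the argument breaks for $m=n$ (the appearance of the row-type optimal anticodes $\Mat(K,\rr)$, which need not contain the supports of subcodes as column spaces) correctly identifies the source of the possible strict inequality, as witnessed by Example~\ref{exnot}. The only caveat worth recording is the implicit restriction $i\le\dim(\mC)$, needed so that a subcode $\mD$ of dimension $i$ exists and the minima defining $a_i$ and $cs_i$ are taken over nonempty sets; with that convention your proof is complete.
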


We stress that there exist codes $\mC \subseteq \Mat$ with $m=n$ and $a_i(\mC) < cs_i(\mC)$, e.g. the code $\mC$ of Example~\ref{exnot}.

\section{The $q$-Analogue of a Polymatroid}\label{secqPolym}

This section introduces $q$-polymatroids, that are a $q$-analogue of polymatroids. For more on (poly)matroids, see the standard references~\cite{Ox11,We76}.

\begin{definition} \label{def:qpoly}
A \textbf{$q$-polymatroid} is a pair $P=(\F_q^n,\rho)$ where $n \ge 1$ and $\rho$ is a function from the set of all subspaces of $\F_q^n$ to $\mathbb{R}$ such that, for all $A,B\subseteq \F_q^n$:
\begin{itemize} \setlength\itemsep{0em}
\item[(P1)] $0\leq \rho(A)\leq\dim(A)$,
\item[(P2)] if $A\subseteq B$, then $\rho(A)\leq \rho(B)$,
\item[(P3)] $\rho(A+B)+\rho(A\cap B)\leq \rho(A)+\rho(B)$.
\end{itemize}
\end{definition}

Notice that a $q$-polymatroid such that $\rho$ is integer-valued is a {\bf $q$-matroid} according to~\cite[Definition~2.1]{JP16}.

\begin{remark}
Our definition of $q$-polymatroid is slightly different from that of $(q,r)$-polymatroid given by Shiromoto in~\cite[Definition 2]{Sh18}. However, a $(q,r)$-polymatroid $(E,\rho)$ as defined by Shiromoto corresponds to the $q$-polymatroid $(E,\rho/r)$ according to our definition. Moreover, a $q$-polymatroid whose rank function takes values in $\mathbb{Q}$ corresponds to a $(q,r)$-polymatroid as defined by Shiromoto up to multiplying the rank function for an $r$ which clears denominators.
\end{remark}

\begin{remark}
One could also define a $q$-polymatroid $P$ as a pair $(\F_q^n,\rho)$ that satisfies $\rho(A)\geq 0$ for all $A\subseteq\F_q^n$, (P2), and (P3). 
Up to multiplying the rank function by a suitable constant, one may additionally assume that $\rho(x)\leq 1$ for all 1-dimensional subspaces $x\subseteq\F_q^n$. It is easy to show that this is equivalent to Definition~\ref{def:qpoly}.
\end{remark}

The definition of $q$-polymatroid that we propose is a direct $q$-analogue of the definition of an ordinary polymatroid, with the extra property that $\rho(A)\leq \dim(A)$ for all $A$. As in the ordinary case, a $q$-matroid is a $q$-polymatroid. At the end of Section~\ref{secStr} we give an example of a $q$-polymatroid that is not a $q$-matroid.
One has the following natural notion of equivalence for $q$-polymatroids.

\begin{definition} \label{defequipolym}
Let $(\F_q^n,\rho_1)$ and $(\F_q^n,\rho_2)$ be $q$-polymatroids. We say that $(\F_q^n,\rho_1)$ and $(\F_q^n,\rho_2)$ are \textbf{equivalent} if there exists an $\F_q$-linear isomorphism $\varphi:\F_q^n\to \F_q^n$ such that $\rho_1(A)=\rho_2(\varphi(A))$ for all $A\subseteq \F_q^n$. In this case we write $(\F_q^n,\rho_1) \sim (\F_q^n,\rho_2)$.
\end{definition}

We start by introducing a notion of duality for $q$-polymatroids.

\begin{definition} \label{defdualmatroid}
Let $P=(\F_q^n,\rho)$ be a $q$-polymatroid. For all subspaces $A \subseteq \F_q^n$ define
$$\rho^*(A)=\dim(A)-\rho(\F_q^n)+\rho(A^\perp),$$
where $A^\perp$ is the orthogonal complement of $A$ with respect to the standard inner product on $\F_q^n$.
We call $P^*=(\F_q^n,\rho^*)$ the \textbf{dual} of the $q$-polymatroid $P$.
\end{definition}

The proof of the next theorem is essentially the same as the proof of~\cite[Theorem~42]{JP16}.

\begin{theorem}
The dual $P^*$ is a $q$-polymatroid.
\end{theorem}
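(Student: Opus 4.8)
The plan is to verify directly that $\rho^*$ satisfies (P1), (P2), (P3) from Definition~\ref{def:qpoly}, using the corresponding properties of $\rho$ together with elementary facts about orthogonal complements in $\F_q^n$: namely $\dim(A^\perp) = n - \dim(A)$, $(A^\perp)^\perp = A$, $(A+B)^\perp = A^\perp \cap B^\perp$, and $(A\cap B)^\perp = A^\perp + B^\perp$. The key point is that $A \mapsto A^\perp$ is an inclusion-reversing involution on the subspace lattice that swaps sums and intersections, so all three axioms for $\rho^*$ will reduce to the matching axioms for $\rho$ after this substitution.

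First I would check (P2): if $A \subseteq B$ then $B^\perp \subseteq A^\perp$, so $\rho(B^\perp) \le \rho(A^\perp)$ by (P2) for $\rho$; combined with $\dim(A) \le \dim(B)$ this gives $\rho^*(A) = \dim(A) - \rho(\F_q^n) + \rho(A^\perp) \le \dim(B) - \rho(\F_q^n) + \rho(B^\perp) = \rho^*(B)$. Next, (P3): writing out $\rho^*(A+B) + \rho^*(A\cap B)$ and using $\dim(A+B) + \dim(A\cap B) = \dim(A) + \dim(B)$ together with $(A+B)^\perp = A^\perp\cap B^\perp$ and $(A\cap B)^\perp = A^\perp + B^\perp$, the claim reduces to $\rho(A^\perp \cap B^\perp) + \rho(A^\perp + B^\perp) \le \rho(A^\perp) + \rho(B^\perp)$, which is exactly (P3) for $\rho$ applied to the subspaces $A^\perp$ and $B^\perp$. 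The two $-\rho(\F_q^n)$ terms and the dimension terms cancel cleanly on both sides.

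The part requiring the most care is (P1), i.e. $0 \le \rho^*(A) \le \dim(A)$. The upper bound is the easier half: $\rho^*(A) = \dim(A) - \rho(\F_q^n) + \rho(A^\perp) \le \dim(A)$ is equivalent to $\rho(A^\perp) \le \rho(\F_q^n)$, which holds by (P2) since $A^\perp \subseteq \F_q^n$. For the lower bound $\rho^*(A) \ge 0$, I need $\rho(\F_q^n) - \rho(A^\perp) \le \dim(A) = n - \dim(A^\perp)$; setting $V = A^\perp$, this is the statement that $\rho(\F_q^n) - \rho(V) \le \dim(\F_q^n) - \dim(V)$, i.e. that the rank function cannot grow faster than the dimension as one passes from $V$ up to $\F_q^n$. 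This is where I would invoke (P3) in the form of submodularity applied to a suitable pair, or more simply iterate: extend $V$ to $\F_q^n$ by a chain of subspaces each obtained by adding one dimension, and at each step use (P3) with a $1$-dimensional subspace $x$ to get $\rho(W + x) - \rho(W) \le \rho(x) - \rho(x \cap W) \le \rho(x) \le \dim(x) = 1$ (using (P1) for $\rho$ at the $1$-dimensional space $x$, and $\rho(x\cap W)\ge 0$). Summing these $n - \dim(V)$ inequalities along the chain yields $\rho(\F_q^n) - \rho(V) \le n - \dim(V)$, as needed.

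Finally I would note that $(\F_q^n, \rho^*)$ is a pair of the correct shape with $\rho^*$ real-valued, so having checked (P1)--(P3) we conclude $P^*$ is a $q$-polymatroid. The only genuine subtlety is the lower bound in (P1), and the chain argument above handles it; everything else is formal manipulation with orthogonal complements.
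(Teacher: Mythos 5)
Your verifications of (P3) and of both halves of (P1) are correct, and your overall strategy is exactly the intended one: the paper omits the proof and refers to \cite[Theorem~42]{JP16}, which argues by direct verification of the axioms using the lattice anti-isomorphism $A\mapsto A^\perp$ and the ``unit increase'' consequence of submodularity, just as you do. However, your (P2) step as written contains a genuine error. From $A\subseteq B$ you correctly deduce $B^\perp\subseteq A^\perp$ and hence $\rho(B^\perp)\le\rho(A^\perp)$, but this inequality points the \emph{wrong} way for your conclusion: you need $\dim(A)+\rho(A^\perp)\le\dim(B)+\rho(B^\perp)$, and the two facts you invoke only give $\dim(A)\le\dim(B)$ together with $\rho(A^\perp)\ge\rho(B^\perp)$, which do not determine the sign of the difference. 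Monotonicity of $\rho^*$ is \emph{not} a formal consequence of monotonicity of $\rho$ plus monotonicity of dimension.

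The correct argument for (P2) is precisely the chain lemma you prove for the lower bound in (P1), stated for an arbitrary nested pair rather than only for $V\subseteq\F_q^n$: for subspaces $V\subseteq W$ one has $\rho(W)-\rho(V)\le\dim(W)-\dim(V)$, by extending $V$ to $W$ one dimension at a time and using (P3) with a one-dimensional summand at each step. Applying this to $B^\perp\subseteq A^\perp$ gives $\rho(A^\perp)-\rho(B^\perp)\le\dim(A^\perp)-\dim(B^\perp)=\dim(B)-\dim(A)$, which rearranges to $\rho^*(A)\le\rho^*(B)$. So the missing ingredient is already present in your write-up; you only need to prove the chain lemma in this general form and invoke it in (P2) as well as in (P1). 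With that repair the proof is complete and matches the standard argument.
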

We will need the following property of dual $q$-polymatroids.

\begin{proposition} \label{passes}
Let $P_1=(\F_q^n,\rho_1)$ and $P_2=(\F_q^n,\rho_2)$ be $q$-polymatroids.
If $P_1 \sim P_2$, then $P_1^* \sim P_2^*$. 
Moreover, for every $q$-polymatroid $P$ we have $P^{**}=P$.
\end{proposition}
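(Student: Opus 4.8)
The plan is to verify the three assertions in turn, all by direct computation from Definition~\ref{defdualmatroid}. First I would prove $P^{**}=P$, since this is the algebraic heart of the statement and the other parts will reuse its bookkeeping. Writing $r:=\rho(\F_q^n)$, we have by definition $\rho^*(A)=\dim(A)-r+\rho(A^\perp)$ for every subspace $A\subseteq\F_q^n$. In particular, taking $A=\F_q^n$ and using $(\F_q^n)^\perp=\{0\}$ together with $\rho(\{0\})=0$ (which follows from (P1)), we get $\rho^*(\F_q^n)=n-r$. Now apply the definition a second time to $\rho^*$: for any subspace $A$,
$$\rho^{**}(A)=\dim(A)-\rho^*(\F_q^n)+\rho^*(A^\perp)=\dim(A)-(n-r)+\big(\dim(A^\perp)-r+\rho(A^{\perp\perp})\big).$$
Using the two standard facts that $\dim(A)+\dim(A^\perp)=n$ and $A^{\perp\perp}=A$ for subspaces of $\F_q^n$ under the standard inner product, the right-hand side collapses to $\rho(A)$, so $\rho^{**}=\rho$ and hence $P^{**}=P$.

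Next I would prove that equivalence is preserved by duality. Suppose $P_1\sim P_2$ via an $\F_q$-linear isomorphism $\varphi:\F_q^n\to\F_q^n$, so $\rho_1(A)=\rho_2(\varphi(A))$ for all $A$. Taking $A=\F_q^n$ gives $\rho_1(\F_q^n)=\rho_2(\F_q^n)$; call this common value $r$. The natural candidate isomorphism realizing $P_1^*\sim P_2^*$ is $\psi:=(\varphi^t)^{-1}$, the inverse transpose, which is characterized by the identity $\psi(A^\perp)=\varphi(A)^\perp$ for every subspace $A$ (equivalently, $\langle \psi(x),\varphi(y)\rangle=\langle x,y\rangle$). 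Given a subspace $B\subseteq\F_q^n$, write $B=\varphi(A)^\perp$ for a suitable $A$ — concretely $A=\varphi^{-1}(B^\perp)$ — so that $B^\perp=\varphi(A)$ and $\psi^{-1}(B)=A^\perp$. Then
$$\rho_1^*(\psi^{-1}(B))=\dim(A^\perp)-r+\rho_1(A)=\dim(A^\perp)-r+\rho_2(\varphi(A))=\dim(B)-r+\rho_2(B^\perp)=\rho_2^*(B),$$
where I used $\dim(A^\perp)=\dim(\varphi(A)^\perp)=\dim(B)$. This is exactly the statement that $P_2^*\sim P_1^*$ via $\psi^{-1}$, equivalently $P_1^*\sim P_2^*$; one should just note that the equivalence relation of Definition~\ref{defequipolym} is symmetric, which is immediate since $\varphi^{-1}$ witnesses the reverse direction.

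The only genuine subtlety — and the step I would flag as the main thing to get right rather than a real obstacle — is the interplay between $\varphi$ and the orthogonal complement: one must use the inverse transpose $(\varphi^{-1})^t$ (not $\varphi$ itself) to intertwine the two dual rank functions, and carefully track which map is applied to which subspace so that the identity $\psi(A^\perp)=\varphi(A)^\perp$ is used in the correct direction. Everything else is a routine unwinding of definitions together with the elementary facts $\dim A+\dim A^\perp=n$, $A^{\perp\perp}=A$, and $\rho(\{0\})=0$. Once $P^{**}=P$ is established, an alternative and slightly slicker route for the first part is available: from $P_1\sim P_2$ one gets $P_1^*\sim P_2^*$ by the argument above, and then the relation $P^{**}=P$ shows the correspondence $P\mapsto P^*$ is an involution on equivalence classes, but the direct computation already suffices.
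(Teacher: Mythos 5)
Your proof is correct and follows essentially the same route as the paper: both parts are direct computations from Definition~\ref{defdualmatroid}, with the equivalence $P_1^*\sim P_2^*$ witnessed by the inverse transpose (adjoint) of $\varphi$ via the intertwining identity $\psi(A^\perp)=\varphi(A)^\perp$, exactly as in the paper's argument. You are in fact somewhat more explicit than the paper, which leaves the verification of $\rho^{**}=\rho$ to the reader and states the intertwining identity with a typo; your bookkeeping of which map acts on which subspace is careful and correct.
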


\begin{proof}
By definition, there exists an $\F_q$-isomorphism $\varphi: \F_q^n \to \F_q^n$ with the property that $\rho_1(A)=\rho_2(\varphi(A))$ for all $A \subseteq \F_q^n$. In particular, $\varphi(\F_q^n)=\F_q^n$. Therefore, by definition of $\rho_1^*$, for all $A \subseteq \F_q^n$ we have
$$\rho_1^*(A) = \dim(A) -\rho_2(\F_q^n) + \rho_2(\varphi(A^\perp)).$$
Now let $\psi: \F_q^n \to \F_q^n$ be the adjoint of $\varphi$ with respect to the standard inner product of $\F_q^n$. Then 
$\varphi$ is an $\F_q$-isomorphism, and 
$\varphi(A)^\perp=\psi(A)^\perp$ for all $A \subseteq \F_q^n$.
Therefore
$$\rho_1^*(A) = \dim(A) -\rho_2(\F_q^n) + \rho_2(\psi(A)^\perp)=\rho_2^*(\psi(A)).$$

If $P=(\F_q^n,\rho)$ is a $q$-polymatroid, then it is straighforward to check that $\rho^{**}(A)=\rho(A)$. This implies $P^{**}=P$.
\end{proof}

\section{Rank-Metric Codes and $q$-Polymatroids}\label{secCodesPolym}

Starting from a rank-metric code $\mC \subseteq \Mat$, in this section we construct two $q$-polymatroids: one associated to the column spaces, and the other to the row spaces. They will be denoted by $P(\mC,\cc)$ and $P(\mC,\rr)$ respectively. As the reader will see in the next sections, several structural properties of $\mC$ can be read off the associated $q$-polymatroids.

We start by studying subcodes of a given code, whose matrices are \textit{supported}  on a subspace $J \subseteq \F_q^n$ or $K \subseteq \F_q^m$. See~\cite{alblatt} for a lattice-theoretic definition of support.

\begin{notation}
Let $\mC \subseteq \Mat$ be a rank-metric code, and let $J\subseteq\mathbb{F}_q^n$ and $K \subseteq \F_q^m$ be subspaces. We define
  $$\mC(J,\cc) := \{M\in\mC \mid \colsp(M)\subseteq J\} \qquad \mbox{and} \qquad \mC(K,\rr) := \{M\in\mC \mid \rowsp(M)\subseteq K\},$$
where $\colsp(M)\subseteq\mathbb{F}_q^n$ and $\rowsp(M) \subseteq \F_q^m$ are the spaces generated over $\F_q$ by the columns, respectively the rows, of $M$. \end{notation}

Notice that $\mC(J,\cc)$ and $\mC(K,\rr)$ are subcodes of $\mC$ for all $J \subseteq \F_q^n$ and 
$K \subseteq \F_q^m$. 
In the sequel, we denote by $J^\perp$ the orthogonal of a space $J \subseteq \F_q^n$ with respect to the standard inner product of $\F_q^n$. We use the same notation for subspaces 
$K \subseteq \F_q^m$.
No confusion will arise with the trace-dual of a code $\mC \subseteq \Mat$.

%

\begin{notation}
Let $\mC \subseteq \Mat$ be a rank-metric code.
For subspaces $J \subseteq \F_q^n$ and $K \subseteq \F_q^m$ define the rational numbers
\begin{eqnarray*}
 \rho_\cc(\mC,J) &:=& (\dim(\mC) - \dim(\mC(J^\perp,\cc))/m, \\
 \rho_\rr(\mC,K) &:=& (\dim(\mC) - \dim(\mC(K^\perp,\rr))/n.
\end{eqnarray*} 
\end{notation}

For simplicity of notation, in the sequel we sometimes drop the index  $\mC$ and denote the rank functions simply by $\rho_\cc$ and $\rho_\rr$. 
The following result shows that a rank-metric code $\mC \subseteq \Mat$ gives rise to a pair of $q$-polymatroids via $\rho_\cc$ and $\rho_\rr$.

\begin{theorem} \label{givesrise}
Let $\mC \subseteq \Mat$ be a rank-metric code. The pairs $(\F_q^n,\rho_\cc)$ and $(\F_q^m,\rho_\rr)$ are $q$-polymatroids.
\end{theorem}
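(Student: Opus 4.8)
The plan is to verify the three polymatroid axioms (P1)--(P3) for $\rho_\cc$; the argument for $\rho_\rr$ is identical after transposing, so it suffices to do one. The key elementary observation, which I would establish first, is a dimension formula for the subcodes $\mC(J,\cc)$. Writing $\Mat(J,\cc)=\{M\in\Mat\mid\colsp(M)\subseteq J\}$, we have $\mC(J,\cc)=\mC\cap\Mat(J,\cc)$, and $\dim_{\F_q}\Mat(J,\cc)=m\dim(J)$. From this and the modular law for subspaces of $\Mat$ one gets
$$\dim(\mC(J,\cc))=\dim(\mC)+\dim\Mat(J,\cc)-\dim(\mC+\Mat(J,\cc))=\dim(\mC)+m\dim(J)-\dim(\mC+\Mat(J,\cc)).$$
Equivalently, $\dim(\mC)-\dim(\mC(J,\cc))=\dim((\mC+\Mat(J,\cc))/\Mat(J,\cc))$, which one can also read as the $\F_q$-dimension of the image of $\mC$ under the projection $\Mat\to\Mat/\Mat(J,\cc)$. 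I expect the cleanest route is to phrase $m\,\rho_\cc(\mC,J)=\dim(\mC)-\dim(\mC(J^\perp,\cc))$ as the rank of the composite linear map $\pi_J\colon\mC\hookrightarrow\Mat\twoheadrightarrow\Mat/\Mat(J^\perp,\cc)$, since properties of ranks of restricted/corestricted maps then give (P1) and (P2) quickly, and (P3) will follow from a submodularity estimate on these images.

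For (P1): $\rho_\cc(\mC,J)\ge 0$ is immediate since $\mC(J^\perp,\cc)\subseteq\mC$. For $\rho_\cc(\mC,J)\le\dim(J)$, note $\pi_J(\mC)$ is a subspace of $\Mat/\Mat(J^\perp,\cc)$, which has dimension $mn-m\dim(J^\perp)=m\dim(J)$; dividing by $m$ gives the bound. For (P2): if $A\subseteq B$ then $B^\perp\subseteq A^\perp$, so $\Mat(B^\perp,\cc)\subseteq\Mat(A^\perp,\cc)$, hence $\mC(B^\perp,\cc)\subseteq\mC(A^\perp,\cc)$, so $\dim(\mC(B^\perp,\cc))\le\dim(\mC(A^\perp,\cc))$ and therefore $\rho_\cc(\mC,A)\le\rho_\cc(\mC,B)$. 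These two are routine.

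The main work is (P3), the submodular inequality $\rho_\cc(\mC,A+B)+\rho_\cc(\mC,A\cap B)\le\rho_\cc(\mC,A)+\rho_\cc(\mC,B)$. Clearing the factor $1/m$ and substituting, this is equivalent to
$$\dim(\mC(A^\perp,\cc))+\dim(\mC(B^\perp,\cc))\le\dim(\mC((A+B)^\perp,\cc))+\dim(\mC((A\cap B)^\perp,\cc)).$$
Using $(A+B)^\perp=A^\perp\cap B^\perp$ and $(A\cap B)^\perp=A^\perp+B^\perp$, and renaming $U=A^\perp$, $V=B^\perp$, this becomes
$$\dim(\mC(U,\cc))+\dim(\mC(V,\cc))\le\dim(\mC(U\cap V,\cc))+\dim(\mC(U+V,\cc)).$$
Here the key point is that $\Mat(U,\cc)\cap\Mat(V,\cc)=\Mat(U\cap V,\cc)$ (a matrix has all columns in $U$ and all in $V$ iff all in $U\cap V$), so $\mC(U,\cc)\cap\mC(V,\cc)=\mC(U\cap V,\cc)$; and $\mC(U,\cc)+\mC(V,\cc)\subseteq\mC(U+V,\cc)$. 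Applying the modular law for the subspaces $\mC(U,\cc),\mC(V,\cc)$ of $\mC$:
$$\dim(\mC(U,\cc))+\dim(\mC(V,\cc))=\dim(\mC(U,\cc)+\mC(V,\cc))+\dim(\mC(U\cap V,\cc))\le\dim(\mC(U+V,\cc))+\dim(\mC(U\cap V,\cc)),$$
which is exactly what is needed. The only genuine subtlety, and the step I would be most careful about, is the identity $\Mat(U,\cc)\cap\Mat(V,\cc)=\Mat(U\cap V,\cc)$ together with the two dual identities $(A+B)^\perp=A^\perp\cap B^\perp$ and $(A\cap B)^\perp=A^\perp+B^\perp$; these are standard but must be invoked explicitly, and one should note that $\mC(U,\cc)+\mC(V,\cc)$ need not equal $\mC(U+V,\cc)$ (only containment holds), which is why submodularity, rather than modularity, is what survives at the level of $\rho_\cc$. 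Finally, to pass to the statement for $\rho_\rr$ on $\F_q^m$, observe that $\mC(K,\rr)=(\mC^t)(K,\cc)$ when one transposes (using $n\leq m$ does not matter here since the roles of rows and columns are symmetric in the argument), so all three axioms transfer verbatim with $n$ in place of $m$ as the normalizing factor.
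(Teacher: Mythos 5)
Your proof is correct and follows essentially the same route as the paper: (P2) is the same monotonicity observation, and (P3) is exactly the paper's argument, reducing submodularity to the modular law for the subspaces $\mC(U,\cc)$ and $\mC(V,\cc)$ of $\mC$ via the identities $\Mat(U,\cc)\cap\Mat(V,\cc)=\Mat(U\cap V,\cc)$, $(A+B)^\perp=A^\perp\cap B^\perp$, $(A\cap B)^\perp=A^\perp+B^\perp$, and the containment $\mC(U,\cc)+\mC(V,\cc)\subseteq\mC(U+V,\cc)$. The only divergence is the upper bound in (P1), where you use a direct rank--nullity argument on the composite $\mC\hookrightarrow\Mat\twoheadrightarrow\Mat/\Mat(J^\perp,\cc)$ together with $\dim\Mat(J^\perp,\cc)=m\dim(J^\perp)$, rather than the duality formula of~\cite[Lemma~28]{RR15} that the paper invokes; your version is self-contained and equally valid.
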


To prove the theorem we need a preliminary result, whose proof is  left to the reader.

\begin{lemma}\label{03-22-18}
Let $I, J \subseteq \F_q^n$ be subspaces. We have:
$$\Mat(I\cap J,\cc)=\Mat(I,\cc)\cap\Mat(J,\cc) \mbox{ and } \Mat(I + J,\cc)=\Mat(I,\cc)+\Mat(J,\cc).$$
\end{lemma}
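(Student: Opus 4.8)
The statement to prove is Lemma~\ref{03-22-18}: for subspaces $I,J\subseteq\F_q^n$,
$$\Mat(I\cap J,\cc)=\Mat(I,\cc)\cap\Mat(J,\cc) \quad\text{and}\quad \Mat(I+J,\cc)=\Mat(I,\cc)+\Mat(J,\cc).$$
The plan is to prove each of the two set identities by mutual inclusion, working directly from the definition $\Mat(L,\cc)=\{M\in\Mat\mid\colsp(M)\subseteq L\}$ for a subspace $L\subseteq\F_q^n$. The first identity is entirely formal and follows from the elementary fact that for a single matrix $M$, the condition $\colsp(M)\subseteq L$ is monotone in $L$. The only genuine content is the inclusion $\supseteq$ in the second identity, which requires constructing, from a matrix whose column space lies in $I+J$, an explicit decomposition into two matrices with column spaces in $I$ and $J$ respectively.

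For the intersection identity I would argue: a matrix $M$ lies in $\Mat(I,\cc)\cap\Mat(J,\cc)$ if and only if $\colsp(M)\subseteq I$ and $\colsp(M)\subseteq J$, which holds if and only if $\colsp(M)\subseteq I\cap J$, i.e. $M\in\Mat(I\cap J,\cc)$. This is immediate from the definition and needs no further calculation. For the sum identity, the inclusion $\Mat(I,\cc)+\Mat(J,\cc)\subseteq\Mat(I+J,\cc)$ is also easy: if $M=M_1+M_2$ with $\colsp(M_1)\subseteq I$ and $\colsp(M_2)\subseteq J$, then each column of $M$ is a sum of a vector in $I$ and a vector in $J$, so $\colsp(M)\subseteq I+J$.

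The reverse inclusion $\Mat(I+J,\cc)\subseteq\Mat(I,\cc)+\Mat(J,\cc)$ is the step requiring actual work, and I expect it to be the main (though modest) obstacle. Given $M\in\Mat(I+J,\cc)$, each column $M^{(\ell)}$ of $M$ lies in $I+J$, so we may write $M^{(\ell)}=u_\ell+w_\ell$ with $u_\ell\in I$ and $w_\ell\in J$. Assembling the $u_\ell$ as the columns of a matrix $M_1$ and the $w_\ell$ as the columns of $M_2$ gives $M=M_1+M_2$ with $\colsp(M_1)\subseteq I$ and $\colsp(M_2)\subseteq J$, hence $M\in\Mat(I,\cc)+\Mat(J,\cc)$. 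The decomposition of each column need not be unique (when $I\cap J\neq\{0\}$), but any choice works, so the argument goes through without difficulty. Combining the two inclusions in each case yields both equalities, completing the proof. One could alternatively phrase the whole lemma in a coordinate-free way by noting that $\Mat(L,\cc)\cong L\otimes_{\F_q}(\F_q^m)^*$ as the space of $\F_q$-linear maps from $(\F_q^m)^*$ into $L$, under which intersection of subspaces corresponds to intersection and sum to sum; but the elementary column-by-column argument above is the most direct route and is what I would write out.
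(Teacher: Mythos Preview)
Your proof is correct; the paper itself leaves this lemma ``to the reader'' and gives no argument, so your direct column-by-column verification of the two inclusions is exactly the intended elementary proof.
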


\begin{proof}[Proof of Theorem~\ref{givesrise}]
We prove that $(\F_q^n,\rho_\cc)$ is a $q$-polymatroid. The proof that $(\F_q^m,\rho_\rr)$ is a $q$-polymatroid is completely analogous, hence we omit it.

We start by proving (P1). It is clear form the definition that $\rho_\cc(J)\geq0$. The other inequality follows from~\cite[Lemma~28]{RR15}:
$$\rho_\cc(J)=(\dim(\mC)-(\dim(\mC)-m(n-\dim J^\perp)+\dim(\mC^\perp(J))))/m \leq \dim(J).$$
Now let $I,J\subseteq\F_q^n$ such that $I\subseteq J$. Then $\mC(J^\perp,\cc)\subseteq \mC(I^\perp,\cc)$, thus $\rho_\cc(I)\leq \rho_\cc(J)$. This establishes (P2). For (P3), we have
\begin{eqnarray*}
&&\dim \mC((I+J)^\perp,\cc)+\dim \mC((I\cap J)^\perp,\cc)\\
&=&\dim (\mC\cap \Mat(I^\perp\cap J^\perp,\cc))+\dim (\mC\cap \Mat(I^\perp+ J^\perp,\cc))\\
&\geq& \dim (\mC\cap \Mat(I^\perp,\cc)\cap\Mat( J^\perp,\cc))+
\dim ((\mC\cap (\Mat(I^\perp,\cc))+(\mC\cap \Mat(J^\perp,\cc)))\\
&=& \dim (\mC\cap \Mat(I^\perp,\cc))+
\dim (\mC\cap\Mat( J^\perp,\cc)),
\end{eqnarray*}
where the first equality follows from~\cite[Lemma~27]{RR15}. The inequality follows from combining Lemma~\ref{03-22-18} with $\mC\cap (\Mat(I^\perp,\cc)+\Mat(J^\perp,\cc))\supseteq
(\mC\cap \Mat(I^\perp,\cc))+(\mC\cap\Mat(J^\perp,\cc)).$
\end{proof}

\begin{notation}
The $q$-polymatroids associated to a rank-metric code $\mC \subseteq \Mat$ are denoted by $P(\mC,\cc)$ and $P(\mC,\rr)$, respectively. 
\end{notation}

\section{Structural Properties of Codes via $q$-Polymatroids} \label{secStr}

In this section we investigate some connections between rank-metric codes and the associated $q$-polymatroids.
We show that the $q$-polymatroids associated to a code $\mC$ determine the dimension of the code and its minimum distance, and characterize the property of being MRD. 

\begin{proposition}
Let $\mC \subseteq \Mat$ be a rank-metric code. Then $$\dim(\mC)=m\cdot\rho_\cc(\mC,\F_q^n)=n\cdot\rho_\rr(\mC,\F_q^m).$$
\end{proposition}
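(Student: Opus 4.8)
The plan is to unwind both equalities directly from the definition of the rank functions $\rho_\cc$ and $\rho_\rr$, the only nontrivial ingredient being the identification of the relevant subcode supported on the trivial subspace.

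First I would evaluate $\rho_\cc(\mC,\F_q^n)$. By definition,
$$\rho_\cc(\mC,\F_q^n) = \bigl(\dim(\mC) - \dim(\mC((\F_q^n)^\perp,\cc))\bigr)/m = \bigl(\dim(\mC) - \dim(\mC(\{0\},\cc))\bigr)/m,$$
since $(\F_q^n)^\perp = \{0\}$. Now the key observation is that $\mC(\{0\},\cc) = \{M \in \mC \mid \colsp(M) \subseteq \{0\}\} = \{0\}$, because the only matrix whose column space is trivial is the zero matrix. Hence $\dim(\mC(\{0\},\cc)) = 0$ and $\rho_\cc(\mC,\F_q^n) = \dim(\mC)/m$, which gives $m\cdot\rho_\cc(\mC,\F_q^n) = \dim(\mC)$.

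The second equality is entirely analogous: $(\F_q^m)^\perp = \{0\}$ inside $\F_q^m$, and $\mC(\{0\},\rr) = \{M \in \mC \mid \rowsp(M) \subseteq \{0\}\} = \{0\}$, so $\rho_\rr(\mC,\F_q^m) = \dim(\mC)/n$ and thus $n\cdot\rho_\rr(\mC,\F_q^m) = \dim(\mC)$. Combining the two computations yields $\dim(\mC) = m\cdot\rho_\cc(\mC,\F_q^n) = n\cdot\rho_\rr(\mC,\F_q^m)$.

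I do not anticipate any real obstacle here: the statement is an immediate consequence of the definitions together with the trivial fact that a matrix with zero column (resp.\ row) space is the zero matrix. The only point worth stating explicitly is the evaluation $\mC(\{0\},\cc) = \mC(\{0\},\rr) = \{0\}$, so that the subtracted dimension vanishes.
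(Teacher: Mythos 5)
Your proof is correct and matches the paper's intent exactly: the paper simply remarks that the result ``follows directly from the definitions,'' and your computation—observing that $(\F_q^n)^\perp=\{0\}$, that $\mC(\{0\},\cc)=\{0\}$ since only the zero matrix has trivial column space, and likewise for rows—is precisely the unwinding the authors had in mind. Nothing to add.
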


The above result follows directly from the definitions.
We now relate the minimum distance of a code with the rank functions of the associated $q$-polymatroids.

\begin{proposition} \label{pp1}
Let $\mC \subseteq \Mat$ be a non-zero rank-metric code. The following are equivalent:
\begin{enumerate}[label=(\arabic*)]  \setlength\itemsep{0em}
\item $d(\mC) \ge d$,
\item $\rho_\cc(J)= \dim(\mC)/m$ for all $J \subseteq \F_q^n$ with $\dim(J) \ge n-d+1$,
\item $\rho_\rr(K)= \dim(\mC)/n$ for all $K \subseteq \F_q^m$ with $\dim(K) \ge m-d+1$.
\end{enumerate}
\end{proposition}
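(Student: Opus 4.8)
The plan is to unwind the definitions of $\rho_\cc$ and $\rho_\rr$ and reduce the whole statement to an assertion about the ranks of the nonzero matrices in $\mC$. First I would record the elementary observation that, directly from the definition of $\rho_\cc$, for a subspace $J\subseteq\F_q^n$ one has $\rho_\cc(J)=\dim(\mC)/m$ if and only if $\dim(\mC(J^\perp,\cc))=0$, i.e.\ $\mC(J^\perp,\cc)=\{0\}$; indeed $\dim(\mC(J^\perp,\cc))\ge 0$ always, so equality in $\rho_\cc(J)\le\dim(\mC)/m$ is equivalent to the vanishing of this supported subcode. Likewise $\rho_\rr(K)=\dim(\mC)/n$ if and only if $\mC(K^\perp,\rr)=\{0\}$. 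I would also note the dimension bookkeeping $\dim(J)\ge n-d+1\iff\dim(J^\perp)\le d-1$, and the analogous statement for subspaces $K\subseteq\F_q^m$.

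With these reductions, condition (2) reads: for every subspace $L\subseteq\F_q^n$ with $\dim(L)\le d-1$, there is no nonzero $M\in\mC$ with $\colsp(M)\subseteq L$. For (1)$\Rightarrow$(2), if $d(\mC)\ge d$ and $M\in\mC(L,\cc)$ with $\dim(L)\le d-1$, then $\rk(M)=\dim(\colsp(M))\le\dim(L)\le d-1<d$, forcing $M=0$, hence $\mC(L,\cc)=\{0\}$ and $\rho_\cc(L^\perp)=\dim(\mC)/m$. For the converse (2)$\Rightarrow$(1) I argue by contrapositive: if $d(\mC)<d$ there is a nonzero $M\in\mC$ with $\rk(M)\le d-1$; setting $L:=\colsp(M)$ and $J:=L^\perp$ gives $\dim(J)=n-\rk(M)\ge n-d+1$ while $0\neq M\in\mC(J^\perp,\cc)$, so $\rho_\cc(J)<\dim(\mC)/m$, contradicting (2).

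The equivalence (1)$\iff$(3) is then proved verbatim, replacing columns by rows: one uses $\rk(M)=\dim(\rowsp(M))$, the translation $\rho_\rr(K)=\dim(\mC)/n\iff\mC(K^\perp,\rr)=\{0\}$, and $\dim(K)\ge m-d+1\iff\dim(K^\perp)\le d-1$. Since every step is a direct manipulation of the definitions, there is no genuine obstacle; the only points that require a little care are the translation of the condition $\rho_\cc(J)=\dim(\mC)/m$ into the vanishing of the corresponding supported subcode (using that this rank value is the maximal one) and keeping the dualization of the dimension inequalities on $J$ and $J^\perp$ straight.
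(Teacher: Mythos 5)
Your proof is correct and follows essentially the same route as the paper: translate $\rho_\cc(J)=\dim(\mC)/m$ into $\mC(J^\perp,\cc)=\{0\}$ via the definition of the rank function, dualize the dimension condition, and observe that the vanishing of all supported subcodes on subspaces of dimension at most $d-1$ is equivalent to $d(\mC)\ge d$. You merely spell out the details that the paper dismisses as ``easy to see,'' so there is nothing to add.
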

\begin{proof}
It is easy to see that the following are equivalent:
\begin{enumerate}[label=(\arabic*$'$)] \setlength\itemsep{0em}
\item $d(\mC) \ge d$,
\item $\mC(J,\cc) = \{0\}$ for all $J \subseteq \F_q^n$ with $\dim(J) \le d-1$,
\item $\mC(K,\rr) = \{0\}$ for all $K \subseteq \F_q^m$ with $\dim(K) \le d-1$.
\end{enumerate}
By definition, for all $J \subseteq \F_q^n$ and $K \subseteq \F_q^m$ we have
$m \rho_\cc(J) =\dim(\mC) - \dim(\mC(J^\perp,\cc))$ and $n \rho_\rr(K) =\dim(\mC) - \dim(\mC(K^\perp,\rr))$.
Hence $(2) \Leftrightarrow (2')$ and $(3) \Leftrightarrow (3')$.
\end{proof}

Therefore, the minimum distance of a rank-metric code can be expressed in terms of the rank function of one of the associated $q$-polymatroids as follows.

\begin{corollary}
Let $0\neq \mC \subseteq \Mat$. The minimum distance of $\mC$ is
\begin{eqnarray*}
d(\mC) &=& n+1-\min\left\{d \mid \rho_\cc(J)=\frac{\dim(\mC)}{m}\mbox{ for all $J \subseteq \F_q^n$ with $\dim(J)=d$} \right\} \\
&=&m+1-\min\left\{d \mid  \rho_\rr(K)=\frac{\dim(\mC)}{n}\mbox{ for all $K \subseteq \F_q^m$ with $\dim(K)=d$} \right\}.
\end{eqnarray*}
\end{corollary}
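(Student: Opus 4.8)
The plan is to derive the Corollary directly from Proposition~\ref{pp1}, since the two displayed formulas are merely reformulations of the equivalences $(1)\Leftrightarrow(2)$ and $(1)\Leftrightarrow(3)$ established there. First I would observe that by monotonicity (P2) of $\rho_\cc$, if $\rho_\cc(J)=\dim(\mC)/m$ for all $J\subseteq\F_q^n$ of dimension exactly $d$, then the same equality holds for all $J$ of dimension $\geq d$ as well: indeed any subspace of dimension $d'>d$ contains a subspace $J_0$ of dimension $d$, whence $\dim(\mC)/m = \rho_\cc(J_0)\leq\rho_\cc(J)\leq\dim(\mC)/m$ by (P1) and (P2). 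So the condition "$\rho_\cc(J)=\dim(\mC)/m$ for all $J$ with $\dim(J)=d$" is equivalent to "$\rho_\cc(J)=\dim(\mC)/m$ for all $J$ with $\dim(J)\geq d$", and in particular the set of $d$ over which we minimize on the right-hand side is upward closed, so its minimum is well-defined (and the set is nonempty since $d=n$ works, as $\rho_\cc(\F_q^n)=\dim(\mC)/m$).

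Next I would invoke Proposition~\ref{pp1}: the equivalence $(1)\Leftrightarrow(2)$ says that $d(\mC)\geq d$ if and only if $\rho_\cc(J)=\dim(\mC)/m$ for all $J$ with $\dim(J)\geq n-d+1$. Setting $e:=n-d+1$, this reads: $d(\mC)\geq n-e+1$ if and only if $\rho_\cc(J)=\dim(\mC)/m$ for all $J$ with $\dim(J)\geq e$, i.e. (using the first paragraph) for all $J$ with $\dim(J)=e$. Let $e_0$ denote the minimum value of $e$ for which this holds, which is exactly the quantity
$$\min\left\{d \mid \rho_\cc(J)=\tfrac{\dim(\mC)}{m}\text{ for all }J\subseteq\F_q^n\text{ with }\dim(J)=d\right\}$$
appearing in the Corollary. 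Then the biconditional gives: $d(\mC)\geq n-e+1 \iff e\geq e_0$, hence $d(\mC)$ is the largest integer of the form $n-e+1$ with $e\geq e_0$, namely $d(\mC)=n-e_0+1$. This is precisely the first claimed formula. The second formula follows identically from the equivalence $(1)\Leftrightarrow(3)$ with $m$ in place of $n$ and $\rho_\rr$ in place of $\rho_\cc$.

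I do not anticipate a genuine obstacle here; the only point requiring a little care is the reduction from "dimension $=d$" to "dimension $\geq d$", which as noted uses (P1)--(P2), and making sure the minimum defining $e_0$ is taken over a nonempty set (guaranteed by $d=n$). One should also note implicitly that $d(\mC)$ is itself the maximum $d$ with $d(\mC)\geq d$, so "$d(\mC)=n-e_0+1$" follows from "$d(\mC)\geq d \iff d\leq n-e_0+1$". I would write the argument for the column side in full and then simply say the row side is analogous.
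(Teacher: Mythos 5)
Your proof is correct and follows exactly the route the paper intends: the corollary is stated as an immediate consequence of Proposition~\ref{pp1}, and your careful reduction from ``$\dim(J)=d$'' to ``$\dim(J)\geq d$'' via monotonicity is precisely the (implicit) missing step. One tiny quibble: the inequality $\rho_\cc(J)\leq\dim(\mC)/m$ follows from (P2) applied to $J\subseteq\F_q^n$ (or directly from the definition of $\rho_\cc$), not from (P1), but this does not affect the argument.
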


This allows us to characterize the property of being MRD in terms of the rank function of one of the associated $q$-polymatroids.

\begin{theorem} \label{tt1}
Let $\mC \subseteq \Mat$ be a non-zero code of minimum distance $d$. The following are equivalent:
\begin{enumerate}[label=(\arabic*)]  \setlength\itemsep{0em}
\item $\mC$ is MRD,
\item $\rho_\cc(J)=\dim(J)$ for all $J \subseteq \F_q^n$ with $\dim(J) \le n-d+1$,
\item $\rho_\cc(J)=\dim(J)$ for some $J \subseteq \F_q^n$ with $\dim(J) = n-d+1$.
\end{enumerate}
\end{theorem}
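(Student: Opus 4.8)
The plan is to prove the chain of implications $(1)\Rightarrow(2)\Rightarrow(3)\Rightarrow(1)$, relying throughout on the defining identity $m\,\rho_\cc(J)=\dim(\mC)-\dim(\mC(J^\perp,\cc))$ together with the Singleton bound (Proposition~\ref{Sing}) and the elementary dimension count $\dim\Mat(J^\perp,\cc)=m\dim(J^\perp)=m(n-\dim J)$.

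First I would handle $(1)\Rightarrow(2)$. Assume $\mC$ is MRD, so $\dim(\mC)=m(n-d+1)$. Fix $J$ with $\dim(J)=j\le n-d+1$, and write $J^\perp$, of dimension $n-j\ge d-1$. The subcode $\mC(J^\perp,\cc)=\mC\cap\Mat(J^\perp,\cc)$ is itself a rank-metric code all of whose nonzero matrices have column space inside $J^\perp$, hence rank at most $n-j$. If $n-j\le d-1$ then, since $d(\mC)=d$, such a subcode can only be $\{0\}$, so $\rho_\cc(J)=\dim(\mC)/m=n-d+1=\dim(J)$ when $j=n-d+1$; for smaller $j$ I would instead bound $\dim\mC(J^\perp,\cc)$ from above. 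The clean way: $\mC(J^\perp,\cc)\subseteq\Mat(J^\perp,\cc)$, and actually one shows $\dim(\mC(J^\perp,\cc))\le \dim\Mat(J^\perp,\cc)-\big(\dim\Mat-\dim\mC\big)$ is too weak; better is to use that $\mC(J^\perp,\cc)$ viewed as a code in $\Mat(J^\perp,\cc)\cong \Mat_{(n-j)\times m}$ has minimum distance $\ge d$ and thus, by Singleton applied there, $\dim\mC(J^\perp,\cc)\le m\big((n-j)-d+1\big)$. Combining, $m\rho_\cc(J)=\dim(\mC)-\dim\mC(J^\perp,\cc)\ge m(n-d+1)-m(n-j-d+1)=mj$, and with (P1) $\rho_\cc(J)\le\dim(J)=j$ we get equality. (Here the subtlety is that $\Mat(J^\perp,\cc)$ is only isometric to $\Mat_{(n-j)\times m}$ after a change of basis; one must check the column-rank of a matrix supported on $J^\perp$ is unchanged, which is immediate, and that $n-j\le m$ may fail — but Singleton in the form $\dim\le m(n'-d+1)$ needs $n'\le m$; if $n-j>m$ one transposes, or one notes the bound $\dim\mathcal D\le m\,\mathrm{maxrk}(\mathcal D)$ with $\mathrm{maxrk}\le\min(n-j,m)$ still suffices after re-deriving Singleton in that generality.) I expect this bookkeeping about the ambient size to be the main obstacle and would state it as the careful step.

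The implication $(2)\Rightarrow(3)$ is trivial: any $J$ of dimension exactly $n-d+1$ (which exists since $1\le d\le n$) works.

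Finally $(3)\Rightarrow(1)$. Suppose $\rho_\cc(J)=\dim(J)=n-d+1$ for some such $J$. Then $m(n-d+1)=m\rho_\cc(J)=\dim(\mC)-\dim\mC(J^\perp,\cc)\le\dim(\mC)$. By Proposition~\ref{Sing}, $\dim(\mC)\le m(n-d+1)$, so both inequalities are equalities, whence $\mC$ is MRD (and incidentally $\mC(J^\perp,\cc)=\{0\}$). This direction is immediate once the defining identity is in hand, so the entire weight of the proof sits in $(1)\Rightarrow(2)$, and specifically in correctly invoking the Singleton-type bound inside the restricted matrix space $\Mat(J^\perp,\cc)$.
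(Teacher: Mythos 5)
Your proof is correct, and the interesting part, $(1)\Rightarrow(2)$, follows a genuinely different route from the paper. The paper's proof invokes an external result (Lemma~48 of~\cite{alblatt}) giving the exact dimension $\dim(\mC(J,\cc))=\dim(\mC)-m(n-\dim(J))$ for all $J$ with $\dim(J)\ge d$ when $\mC$ is MRD, and then computes $\rho_\cc(J)$ directly from that formula. You instead only establish the upper bound $\dim(\mC(J^\perp,\cc))\le m\bigl(\dim(J^\perp)-d+1\bigr)$ by applying the Singleton bound (Proposition~\ref{Sing}) to the restriction of $\mC$ to $\Mat(J^\perp,\cc)\cong\Mat_{(n-\dim J)\times m}$, and you recover the matching inequality $\rho_\cc(J)\le\dim(J)$ from axiom (P1), which Theorem~\ref{givesrise} has already established. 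This makes your argument self-contained relative to the present paper (the only inputs are Proposition~\ref{Sing} and (P1)), at the cost of the bookkeeping you correctly flag about identifying $\Mat(J^\perp,\cc)$ with a smaller matrix space; note that the worry about $n-\dim(J)>m$ never materializes, since the paper's standing convention $n\le m$ forces $n-\dim(J)\le m$, so the Singleton bound applies in the form you want. Your $(3)\Rightarrow(1)$ is a slight streamlining of the paper's: the paper first observes $\mC(J^\perp,\cc)=\{0\}$ because $\dim(J^\perp)=d-1$, whereas you simply sandwich $\dim(\mC)$ between $m(n-d+1)$ and the Singleton bound; both are fine. The remaining implication is trivial in both treatments.
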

\begin{proof}
Assume that $\mC$ is MRD. We claim that $$\dim(\mC(J,\cc))=\dim(\mC)-m(n-\dim(J)) \mbox{ for all $J \subseteq \F_q^n$ with $\dim(J) \ge d-1.$}$$
This is straightforward if $\dim(J)=d-1$. When $\dim(J) \ge d$, it follows from~\cite[Lemma 48]{alblatt}. Let $J \subseteq \F_q^n$ be a subspace with $\dim(J) \le n-d+1$. Since $\dim(J^\perp) \le d-1$ and $\dim(\mC(J,\cc))=\dim(\mC)-m(n-\dim(J))$ we obtain
$$m\rho_\cc(J)=\dim(\mC)-\dim(\mC(J^\perp,\cc)) = \dim(\mC) - \dim(\mC) +m  \dim(J) =m\dim(J).$$
This establishes $(1)  \Rightarrow (2)$.

It is clear that $(2)$ implies $(3)$. So we assume that (3) holds and prove $(1)$. 
Since $\dim(J)=n-d+1$, then $\dim(J^\perp)=d-1$, therefore $\dim(\mC(J^\perp,\cc))=0$. It follows that
$$m\dim(J) = m\rho_\cc(J) =\dim(\mC) - \dim(\mC(J^\perp,\cc))=\dim(\mC),$$
from which we obtain $\dim(\mC)=m\dim(J)=m(n-d+1)$. Hence $\mC$ is MRD.
\end{proof}

\begin{remark}
If $m=n$ and $0\neq\mC\subseteq \Mat$, then the same proof as in Theorem~\ref{tt1} shows that the following are equivalent:
\begin{itemize} \setlength\itemsep{0em}
\item $\mC$ is MRD, 
\item $\rho_\cc(K)=\dim(K)$ for all $K\subseteq \F_q^m$ with $\dim(K) = m-d+1$
\item $\rho_\cc(K)=\dim(K)$ for some $K\subseteq \F_q^m$ with $\dim(K) = m-d+1$. 
\end{itemize}
\end{remark}

\bigskip

Combining Proposition~\ref{pp1} and Theorem~\ref{tt1} we obtain an explicit formula for the rank function of the (column) $q$-polymatroid associated to an MRD code.

\begin{corollary} \label{formulaMRD}
Let $\mC \subseteq \Mat$ be a non-zero MRD code of minimum distance $d$. Then for all $J \subseteq \F_q^n$ we have
\begin{equation} \label{foor}
\rho_\cc(J)= \left\{ \begin{array}{ll} n-d+1 & \mbox{ if } \dim(J) \ge n-d+1, \\ \dim(J) & \mbox{ if } \dim(J) \le n-d+1.\end{array}\right.
\end{equation}
\end{corollary}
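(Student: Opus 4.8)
The plan is to deduce Corollary~\ref{formulaMRD} directly by combining the two characterizations already established, namely Proposition~\ref{pp1} and Theorem~\ref{tt1}. The key observation is that an MRD code of minimum distance $d$ satisfies both hypotheses simultaneously, so we can read off the value of $\rho_\cc(J)$ in each of the two regimes $\dim(J) \ge n-d+1$ and $\dim(J) \le n-d+1$.

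First I would treat the case $\dim(J) \le n-d+1$. Since $\mC$ is MRD, it is in particular a code of minimum distance $d$, so Theorem~\ref{tt1}, implication $(1) \Rightarrow (2)$, gives $\rho_\cc(J) = \dim(J)$ for every such $J$. This settles the lower branch of \eqref{foor}.

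Next I would treat the case $\dim(J) \ge n-d+1$. Here I invoke Proposition~\ref{pp1}: since $d(\mC) \ge d$ (indeed equals $d$), statement $(2)$ of that proposition applies, giving $\rho_\cc(J) = \dim(\mC)/m$ for all $J$ with $\dim(J) \ge n-d+1$. Because $\mC$ is MRD, Proposition~\ref{Sing} is attained with equality, so $\dim(\mC) = m(n-d+1)$ and hence $\dim(\mC)/m = n-d+1$. This yields the upper branch $\rho_\cc(J) = n-d+1$. Finally I would note that the two branches agree on the overlap $\dim(J) = n-d+1$ (both give $n-d+1$), so the piecewise formula is consistent, completing the proof.

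I do not anticipate a genuine obstacle here: the statement is a routine bookkeeping consequence of the two preceding results, and the only point requiring a word of care is to make explicit that an MRD code has $\dim(\mC) = m(n-d+1)$ so that the two descriptions of $\rho_\cc(J)$ on subspaces of dimension exactly $n-d+1$ coincide. The mild subtlety is simply that the two cases in \eqref{foor} are not disjoint, so one should remark that no inconsistency arises rather than leaving the reader to check it.
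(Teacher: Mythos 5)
Your proof is correct and follows exactly the route the paper intends: the paper derives Corollary~\ref{formulaMRD} precisely by combining Proposition~\ref{pp1} (for the regime $\dim(J)\ge n-d+1$, using $\dim(\mC)=m(n-d+1)$) with Theorem~\ref{tt1} (for $\dim(J)\le n-d+1$). Your added remark that the two branches agree when $\dim(J)=n-d+1$ is a sensible consistency check that the paper leaves implicit.
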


In particular, the $q$-polymatroid associated to an MRD code has an integer-valued rank function, i.e., it is a $q$-matroid. It is in fact the uniform $q$-matroid, as explained in~\cite[Example~4.16]{JP16}

It is natural to expect that equivalent rank-metric codes give rise to equivalent $q$-polymatroids. This is true in the following precise sense.

\begin{proposition} \label{equivpoly}
Let $\mC_1, \mC_2 \subseteq \Mat$ be rank-metric codes. Assume $\mC_1 \sim \mC_2$. If $m>n$, then 
$P(\mC_1,\cc) \sim P(\mC_2,\cc)$ and $P(\mC_1,\rr) \sim P(\mC_2,\rr)$. If $n=m$, then one of the following holds:
\begin{itemize}\setlength\itemsep{0em}
\item $P(\mC_1,\cc) \sim P(\mC_2,\cc)$ and $P(\mC_1,\rr) \sim P(\mC_2,\rr)$,
\item  $P(\mC_1,\cc) \sim P(\mC_2,\rr)$ and $P(\mC_1,\rr) \sim P(\mC_2,\cc)$.
\end{itemize}
\end{proposition}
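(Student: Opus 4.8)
The plan is to unwind the definition of equivalence of rank-metric codes via Theorem~\ref{theoiso} and track how the transformations $M \mapsto AMB$ and $M \mapsto AM^tB$ act on the subcodes $\mC(J,\cc)$ and $\mC(K,\rr)$. First I would treat the case $m > n$ (and the first alternative when $n=m$), where $\mC_2 = A\mC_1 B$ for some $A \in \mbox{GL}_n(\F_q)$, $B \in \mbox{GL}_m(\F_q)$. The key observation is that left-multiplication by $A$ acts on column spaces: $\colsp(AM) = A\cdot\colsp(M)$, so conjugation by $A$ gives a linear isomorphism $\varphi_A: \F_q^n \to \F_q^n$, $\varphi_A(J) = AJ$, under which $\mC_2(\varphi_A(J),\cc) = A\,\mC_1(J,\cc)\,B$; similarly right-multiplication by $B$ acts on row spaces via $\varphi_{B^t}: \F_q^m \to \F_q^m$. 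Since $A$ and $B$ are invertible, $\dim \mC_2(AJ,\cc) = \dim \mC_1(J,\cc)$ and $\dim\mC_2 = \dim\mC_1$, so from the definition of $\rho_\cc$ one gets $\rho_\cc(\mC_2, \varphi_A(J^{\perp}))$ matches $\rho_\cc(\mC_1, J^{\perp})$ — here I need to be slightly careful because the definition involves $J^\perp$, so I should use the isomorphism $\psi$ with $\psi(J^\perp) = (\varphi_A(J))^\perp$, i.e. take $\psi$ to be the inverse-transpose of $A$ (the same adjoint trick used in the proof of Proposition~\ref{passes}). With that bookkeeping, $P(\mC_1,\cc) \sim P(\mC_2,\cc)$ and $P(\mC_1,\rr)\sim P(\mC_2,\rr)$.

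Next I would handle the transpose case, which occurs only when $n=m$: here $\mC_2 = A\mC_1^t B$. The point is that transposition swaps column spaces and row spaces, $\colsp(M^t) = \rowsp(M)$, so $\mC_1^t(J,\cc) = (\mC_1(J,\rr))^t$, and hence $\dim\mC_1^t(J,\cc) = \dim\mC_1(J,\rr)$. Combining this with the multiplication-by-$A$-and-$B$ analysis as above, one finds that $\rho_\cc(\mC_2, \cdot)$ is obtained from $\rho_\rr(\mC_1,\cdot)$ by composition with a linear isomorphism of $\F_q^n = \F_q^m$, so $P(\mC_2,\cc) \sim P(\mC_1,\rr)$; symmetrically $P(\mC_2,\rr)\sim P(\mC_1,\cc)$. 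This gives the second alternative. Since by Theorem~\ref{theoiso} every equivalence of rank-metric codes is of one of these two forms (and only the first form occurs when $m>n$), this covers all cases.

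I expect the main obstacle to be purely notational rather than conceptual: carefully choosing which linear isomorphism of $\F_q^n$ (respectively $\F_q^m$) witnesses the equivalence of $q$-polymatroids, given that $\rho_\cc$ is defined through $J^\perp$ rather than $J$. The cleanest route is to prove the auxiliary identities $\mC(AJB)(\text{appropriate support},\cc) = \cdots$ at the level of the \emph{complement-free} quantity first — i.e. to observe directly that if $g(J) := AJ$ then $\dim \mC_2(g(J),\cc) = \dim \mC_1(J,\cc)$ for all $J$ — and only afterwards feed this into the formula for $\rho_\cc$, replacing $J$ by $J^\perp$ and using that $J \mapsto (g(J^\perp))^\perp$ is again a linear isomorphism (it is the map induced by $(A^{-1})^t$). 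I would state this last fact as a one-line lemma or simply invoke the adjoint argument already used in Proposition~\ref{passes}. Once the dictionary ``left multiplication $\leftrightarrow$ action on $\F_q^n$, right multiplication $\leftrightarrow$ action on $\F_q^m$, transpose $\leftrightarrow$ swap $\cc$ and $\rr$'' is set up, the verification of Definition~\ref{defequipolym} is immediate.
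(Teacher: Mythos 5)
Your proposal is correct and follows essentially the same route as the paper: reduce to the two forms $\mC_2=A\mC_1B$ and $\mC_2=A\mC_1^tB$ given by Theorem~\ref{theoiso}, observe that left multiplication by $A$ induces a dimension-preserving bijection $\mC_1(J^\perp,\cc)\to\mC_2(AJ^\perp,\cc)$, and pass to the isomorphism induced by $(A^{-1})^t$ to convert $AJ^\perp$ into $\varphi(J)^\perp$, exactly the adjoint bookkeeping the paper uses. The only cosmetic difference is that you spell out the transpose case (via $\colsp(M^t)=\rowsp(M)$, which swaps $\cc$ and $\rr$), whereas the paper dismisses it as similar.
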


\begin{proof}
Since $\mC_1\sim\mC_2$, then either $\mC_2=A\mC_1 B$ for some invertible $A,B$, or $\mC_2=A\mC_1^t B$ for some invertible $A,B$ and $m=n$.
Since the proofs are similar, we only treat the case when there exists invertible matrices $A,B$ such that $\mC_2=A\mC_1 B$. 
Let $\psi: \F_q^n \to \F_q^n$ be the $\F_q$-linear isomorphism associated to the matrix $A$ with respect to the standard basis. Fix a subspace $J \subseteq \F_q^n$. Multiplication by $A$ on the left and $B$ on the right induces a bijection
\begin{equation} \label{uno}
\mC_1(J^\perp,\cc) \to \mC_2(\psi(J^\perp),\cc).
\end{equation}
Let $\varphi: \F_q^n \to \F_q^n$ denote the $\F_q$-linear isomorphism associated to the matrix $(A^{-1})^t$ with respect to the standard basis. Then we have
$\psi(J^\perp)=\varphi(J)^\perp$, hence bijection (\ref{uno}) can be thought of as a bijection
\begin{equation} \label{due}
\mC_1(J^\perp,\cc) \to \mC_2(\varphi(J)^\perp,\cc).
\end{equation}
Therefore for all subspaces $J \subseteq \F_q^n$ we have 
$\rho_\cc(\mC_1,J)= \rho_\cc(\mC_2,\varphi(J))$. This establishes the 
$q$-polymatroid equivalence $P(\mC_1,\cc) \sim P(\mC_2,\cc)$.
The equivalence $P(\mC_1,\rr) \sim P(\mC_2,\rr)$ can be shown similarly.
\end{proof}

Proposition~\ref{equivpoly} says that equivalent codes have equivalent associated $q$-polymatroids. The next example shows that the converse is false in general, i.e., that inequivalent codes may have equivalent (in fact, even identical) associated
$q$-polymatroids. 

\begin{example}
Let $q=2$ and $m=n=4$. Let $\mC_1$ be the code of~\cite[Example 7.2]{byra}, i.e., the code generated by the four linearly independent binary matrices
$$\begin{pmatrix}
   1 & 0 & 0 & 0 \\ 0 & 0 & 0 & 1 \\ 0 & 0 & 1 & 0 \\ 0 & 1 & 0 & 0
  \end{pmatrix}, \ \ \ \ \ 
\begin{pmatrix}
   0 & 1 & 0 & 0 \\ 0 & 0 & 1 & 1 \\ 0 & 0 & 0 & 1 \\ 1 & 1 & 0 & 0
  \end{pmatrix},  \ \ \ \ \ 
\begin{pmatrix}
   0 & 0 & 1 & 0 \\ 0 & 1 & 1 & 1 \\ 1 & 0 & 1 & 0 \\ 1 & 0 & 0 & 1
  \end{pmatrix},\ \ \ \ \ 
\begin{pmatrix}
   0 & 0 & 0 & 1 \\ 1 & 1 & 1 & 0 \\ 0 & 1 & 0 & 1 \\ 0 & 1 & 1 & 1
  \end{pmatrix}.
$$  
The code $\mC_1$ is MRD and has minimum distance $d(\mC_1)=4$. Let $\mC_2$ be a rank-metric code obtained 
from a Gabidulin code $C \subseteq \F_{2^4}^4$ of minimum distance $4$ via Proposition~\ref{lift}. By~\cite[Example 7.2]{byra}, the code $\mC_1$ has covering radius $\textnormal{cov}(\mC_1)=2$, while it is well known that $\textnormal{cov}(\mC_2)=d(\mC)-1=3$. Since the covering radius of a code is preserved under isometries, we conclude that the codes $\mC_1$ and $\mC_2$ are not equivalent.

On the other hand, the four codes $\mC_1$, $\mC_2$, $\mC_1^t$ and $\mC_2^t$ are all MRD with the same parameters. Therefore by Corollary~\ref{formulaMRD} the rank function of their $q$-polymatroids is determined and given by the formula in (\ref{foor}). This shows that $P(\mC_1,\cc)=P(\mC_1,\rr) = P(\mC_2,\cc)=P(\mC_2,\rr)$, although $\mC_1 \not\sim \mC_2$.
\end{example}

It is known from~\cite{JP16} that a vector rank-metric code $C \subseteq \F_{q^m}^n$ gives rise to a $q$-matroid $M(C)$ on $\F_q^n$. In our notation, $M(C)=P(\Gamma(C),\cc)$,
where $\Gamma$ is any $\F_q$-basis of $\F_{q^m}$. 

\begin{proposition}\label{propequal}
Let  $C \subseteq \F_{q^m}^n$ be a vector rank-metric code, and let $\Gamma,\Gamma'$ be $\F_q$-bases of $\F_{q^m}$.
We have $P(\Gamma(C),\cc) = P(\Gamma'(C),\cc)$ and $P(\Gamma(C),\rr) \sim P(\Gamma'(C),\rr)$.
\end{proposition}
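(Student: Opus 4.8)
The plan is to show that the column $q$-polymatroid $P(\Gamma(C),\cc)$ does not depend on the choice of basis $\Gamma$ at all (equality, not just equivalence), while for the row $q$-polymatroid one only gets equivalence because a change of basis acts on the row side by a genuine linear transformation. First I would analyze how $\Gamma(C)$ changes when $\Gamma$ is replaced by $\Gamma'$. Reusing the computation in the proof of Proposition~\ref{equiv}, if $\Gamma$ and $\Gamma'$ are two bases of $\F_{q^m}$ over $\F_q$, there is an invertible matrix $T=\Gamma(\gamma_1',\ldots,\gamma_m')\in\mbox{GL}_m(\F_q)$ (the change-of-basis matrix) such that $\Gamma'(v)=\Gamma(v)T$ for every $v\in\F_{q^m}^n$; hence $\Gamma'(C)=\Gamma(C)\,T$. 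So it suffices to understand the effect on the associated $q$-polymatroids of multiplying a rank-metric code on the right by an invertible $B\in\mbox{GL}_m(\F_q)$.

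Next I would observe that right multiplication by $B$ preserves column spaces: $\colsp(MB)=\colsp(M)$ for every $M\in\Mat$. Consequently, for every subspace $J\subseteq\F_q^n$ we have $(\mC B)(J,\cc)=\{MB\mid M\in\mC,\ \colsp(M)\subseteq J\}=\mC(J,\cc)\cdot B$, so $\dim((\mC B)(J^\perp,\cc))=\dim(\mC(J^\perp,\cc))$ and therefore $\rho_\cc(\mC B,J)=\rho_\cc(\mC,J)$ for all $J$. Applying this with $\mC=\Gamma(C)$ and $B=T$ gives $\rho_\cc(\Gamma'(C),\cdot)=\rho_\cc(\Gamma(C),\cdot)$ as functions on subspaces of $\F_q^n$, i.e. $P(\Gamma(C),\cc)=P(\Gamma'(C),\cc)$, with the identity map as the isomorphism. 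For the row side, right multiplication by $B$ transforms row spaces by $\rowsp(MB)=\rowsp(M)\cdot B$; writing $\varphi_B:\F_q^m\to\F_q^m$ for the isomorphism $w\mapsto wB$, one checks that $(\mC B)(K,\rr)=\mC(\varphi_B^{-1}(K),\rr)\cdot B$, hence $\dim((\mC B)(K^\perp,\rr))=\dim(\mC(\varphi_B^{-1}(K^\perp),\rr))=\dim(\mC((\psi(K))^\perp,\rr))$ for a suitable isomorphism $\psi$ (the adjoint-type bookkeeping is exactly as in the proof of Proposition~\ref{equivpoly}). This yields $\rho_\rr(\Gamma'(C),K)=\rho_\rr(\Gamma(C),\psi(K))$ for all $K$, which is precisely $P(\Gamma(C),\rr)\sim P(\Gamma'(C),\rr)$.

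The only genuinely delicate point is the bookkeeping of orthogonal complements in the row case: one must track carefully that conjugating the "$\rowsp\subseteq K$" condition by $B$ on the right corresponds, after passing to $K^\perp$ in the definition of $\rho_\rr$, to precomposition with the adjoint (transpose-inverse) of $B$ rather than with $B$ itself, so that the resulting map $\psi$ is a well-defined $\F_q$-linear isomorphism of $\F_q^m$. This is the same manipulation already carried out for Proposition~\ref{equivpoly} (equations~(\ref{uno}) and~(\ref{due})), so I would simply invoke that argument rather than redo it. Everything else — the basis-change formula $\Gamma'(C)=\Gamma(C)T$ and the invariance of column spaces under right multiplication — is routine, and the contrast between equality on the column side and mere equivalence on the row side is explained by the fact that a change of basis of $\F_{q^m}$ is "invisible" to column spaces but acts nontrivially on row spaces.
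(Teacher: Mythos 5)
Your proof is correct, but it is genuinely more self-contained than the paper's, which disposes of the statement in two lines by citation: the column equality is deferred to \cite[Corollary~4.7]{JP16}, and the row equivalence is obtained by chaining Proposition~\ref{equiv} (which gives $\Gamma(C)\sim\Gamma'(C)$ as codes, itself via a citation to~\cite{albgen}) with Proposition~\ref{equivpoly}. You instead isolate the structural fact that a change of $\F_q$-basis of $\F_{q^m}$ acts on the matrix code purely by right multiplication, $\Gamma'(C)=\Gamma(C)T$ with $T\in\mbox{GL}_m(\F_q)$, and then read off both conclusions at once: right multiplication is invisible to column spaces (hence \emph{equality}, not just equivalence, of the column $q$-polymatroids) but moves row spaces by the isomorphism $w\mapsto wT$ (hence only equivalence on the row side, with $\psi$ given by right multiplication by $T^t$ after the $K\mapsto K^\perp$ bookkeeping). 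This buys a uniform explanation of why the two halves of the statement differ in strength, which the paper's citation-based proof leaves implicit, at the cost of redoing a computation that Proposition~\ref{equivpoly} already contains for left multiplication. One cosmetic slip: with the paper's conventions the change-of-basis matrix satisfies $\Gamma(v)=\Gamma'(v)\,\Gamma(\gamma_1',\ldots,\gamma_m')$, so your $T$ should be $\Gamma(\gamma_1',\ldots,\gamma_m')^{-1}$ rather than $\Gamma(\gamma_1',\ldots,\gamma_m')$; since all that matters is that $T$ is invertible and independent of $v$, nothing downstream is affected.
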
 

\begin{proof}
The statement that $P(\Gamma(C),\cc) = P(\Gamma'(C),\cc)$ follows from~\cite[Corollary~4.7]{JP16}. The statement that $P(\Gamma(C),\rr) \sim P(\Gamma'(C),\rr)$ follows by Propositions~\ref{equiv} and~\ref{equivpoly}.
\end{proof}

We continue by showing that there exist rank-metric codes whose associated $q$-polymatroids are not $q$-matroids. 
Even more, in the next example we show that there are $q$-polymatroids such that no non-zero multiple of their rank function defines a $q$-matroid.

\begin{example}
Let $q=3$ and $n=m=2.$ Let $\mC$ be a rank-metric code generated by the matrices
$$M_1:=\begin{pmatrix}1 & 0 \\0 & 0\end{pmatrix},\qquad
M_2:=\begin{pmatrix}0 & 1 \\0 & 0\end{pmatrix},\qquad
M_3:=\begin{pmatrix}0 & 0 \\1 & 0\end{pmatrix}.$$
Consider the subspaces $J:=\left<(1,0)\right>$ and $I:=\left<(0,1)\right>$.
Since $\mC(J^\perp,\cc)=\left< M_3\right>$, we have
$\rho_\cc(\mC,J)=1.$ As $\mC(I^\perp,\cc)=\left<M_1,M_2\right>,$ then $ \rho_\cc(\mC,I)=1/2$. Hence $P(\mC,\cc)$ is a $q$-polymatroid which
 is not a $q$-matroid. 

Let $\alpha\in\mathbb{R}$  with $\alpha \neq 0$, and consider the function $\rho:=\alpha\rho_\cc$. Since $\rho(J)=\alpha$, in order for $\rho$ to be the rank function of a $q$-polymatroid it must be $0<\alpha\leq 1$. Then $\rho(I)=\alpha/2$ is not an integer, so $\rho$ cannot be the rank function of a $q$-matroid.
\end{example}

\section{Generalized Weights as $q$-Polymatroid Invariants}\label{secGenWtPoly}

In this section, we provide further evidence that the $q$-polymatroids associated to a rank-metric code adequately capture the structure of the code. More precisely, in the next theorem we show that the generalized rank-weights of the code are an invariant of the associated $q$-polymatroids. Later in the section, we show that the property of being an optimal anticode can be characterized in terms of the rank function of the associated $q$-polymatroids.

\begin{theorem}\label{03-23-18}
Let $\mC \subseteq \Mat$ be a non-zero rank-metric code and let $1 \le i \le \dim(\mC)$ be an integer. If $n>m$ we have
$$a_i(\mC) = \min \{n-\dim(J) \mid J \subseteq \F_q^n, \ \dim(\mC)-m  \rho_\cc(\mC,J) \ge i\}.$$
If $n=m$ we have $$a_i(\mC)=\min\{a_i(\mC,\cc), \ a_i(\mC,\rr)\},$$
where
\begin{eqnarray*}
a_i(\mC,\cc) &:=& \min \{n-\dim(J) \mid J \subseteq \F_q^n, \ \dim(\mC)-m  \rho_\cc(\mC,J) \ge i\}, \\
a_i(\mC,\rr) &:=& \min \{m-\dim(K) \mid K \subseteq \F_q^m, \ \dim(\mC)-n  \rho_\rr(\mC,K) \ge i\}. \\
\end{eqnarray*}
\end{theorem}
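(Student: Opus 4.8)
The plan is to rewrite the defining condition "$\dim(\mC\cap\mA)\ge i$" for an optimal anticode $\mA$ in terms of the subcodes $\mC(J,\cc)$ and $\mC(K,\rr)$, and then translate the resulting dimension count into the rank functions $\rho_\cc$ and $\rho_\rr$. The starting point is the classification of optimal anticodes in $\Mat$: by the results quoted in the paper (cf. the use of \cite{pazzis} in the example above, and the general statement that optimal anticodes are exactly the spaces $\Mat(J,\cc)$ and $\Mat(K,\rr)$), every optimal anticode $\mA \subseteq \Mat$ is of the form $\Mat(J,\cc)$ for some $J \subseteq \F_q^n$ or $\Mat(K,\rr)$ for some $K \subseteq \F_q^m$; when $n>m$ only the column form occurs (since $\dim\Mat(K,\rr)=n\dim(K)$ is not a multiple of $m$ in general), and when $n=m$ both forms appear. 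So in the case $n>m$ it suffices to range over anticodes $\mA=\Mat(J,\cc)$, and in the case $n=m$ over both families, which is exactly the origin of the split $a_i(\mC)=\min\{a_i(\mC,\cc),a_i(\mC,\rr)\}$.

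Next I would compute, for $\mA=\Mat(J,\cc)$, the two quantities appearing in Definition~\ref{defalberto}. First, $\mC\cap\Mat(J,\cc)=\mC(J,\cc)$ directly from the definition of $\mC(J,\cc)$, so $\dim(\mC\cap\mA)=\dim(\mC(J,\cc))$. Second, $\dim(\mA)=\dim\Mat(J,\cc)=m\dim(J)$, so $\dim(\mA)/m=\dim(J)$. Substituting $J \mapsto J^\perp$ to match the orthogonal-complement convention in the definition of $\rho_\cc$, the condition $\dim(\mC\cap\mA)\ge i$ with $\mA=\Mat(J^\perp,\cc)$ becomes $\dim(\mC(J^\perp,\cc))\ge i$, and $\dim(\mA)/m=\dim(J^\perp)=n-\dim(J)$. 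Now recall $m\,\rho_\cc(\mC,J)=\dim(\mC)-\dim(\mC(J^\perp,\cc))$, so $\dim(\mC(J^\perp,\cc))\ge i \iff \dim(\mC)-m\,\rho_\cc(\mC,J)\ge i$. Putting these together, minimizing $\dim(\mA)/m$ over column-type optimal anticodes with $\dim(\mC\cap\mA)\ge i$ is precisely $\min\{n-\dim(J)\mid J\subseteq\F_q^n,\ \dim(\mC)-m\,\rho_\cc(\mC,J)\ge i\}=a_i(\mC,\cc)$. The identical computation with rows gives that minimizing over row-type anticodes yields $a_i(\mC,\rr)$.

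Finally I would assemble the pieces: when $n>m$, all optimal anticodes are column-type, so $a_i(\mC)=a_i(\mC,\cc)$, which is the first claimed formula; when $n=m$, optimal anticodes come in both flavours and $a_i(\mC)$ is the minimum over the union of the two families, hence $a_i(\mC)=\min\{a_i(\mC,\cc),a_i(\mC,\rr)\}$. One small point to check is that $i$ is in the stated range $1\le i\le\dim(\mC)$ so that the sets being minimized are non-empty (taking $J=\F_q^n$, resp.\ $K=\F_q^m$, gives $\dim(\mC)-m\,\rho_\cc(\mC,\F_q^n)=0$ only when $i$ could fail, but $\dim(\mC(\{0\},\cc))=0$... actually one wants $J=\{0\}$, giving $\dim(\mC(\F_q^n,\cc))=\dim(\mC)\ge i$), so the minima are well defined. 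The main obstacle — really the only nontrivial ingredient — is invoking the correct classification of optimal anticodes in $\Mat$ in the generality needed (arbitrary $q$, arbitrary $n\le m$), including the fact that for $n>m$ no "mixed" or row-type optimal anticodes arise; everything after that is a direct unwinding of definitions.
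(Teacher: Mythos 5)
Your proposal is correct and follows essentially the same route as the paper: invoke the classification of optimal anticodes from de Seguins Pazzis (all of the form $\Mat(J^\perp,\cc)$ when $n\neq m$, and of either column or row type when $n=m$), note that $\mC\cap\Mat(J^\perp,\cc)=\mC(J^\perp,\cc)$ and $\dim\Mat(J^\perp,\cc)=m(n-\dim J)$, and translate the condition $\dim(\mC(J^\perp,\cc))\ge i$ into $\dim(\mC)-m\rho_\cc(\mC,J)\ge i$. The only cosmetic difference is that the paper simply cites the classification for the rectangular case rather than arguing dimension counts for row-type spaces, and your final non-emptiness check (take $J=\{0\}$, so $\mC(J^\perp,\cc)=\mC$) is a small addition the paper leaves implicit.
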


\begin{proof}
Let $J\subseteq \F_q^n$, then by~\cite[Lemma~26]{RR15}
\begin{equation}\label{equalitydim}
\dim(\Mat(J^\perp,\cc))=m\dim(J^\perp)=m(n-\dim(J)).
\end{equation}

Assume that $m>n$. By~\cite[Theorem~6]{pazzis}, the optimal anticodes in $\Mat$ are the spaces of the form $\Mat(J^\perp, \cc)$, where $J$ ranges over the subspaces of $\F_q^n$. 
Therefore 
\begin{eqnarray*}
m  \cdot a_i(\mC)&=& \min\{\dim(\Mat(J^\perp,\cc)) \mid J \subseteq \F_q^n, \ \dim(\mC \cap \Mat(J^\perp,\cc)) \ge i\}\\
&=& m \cdot \min \{n-\dim(J) \mid J \subseteq \F_q^n, \ \dim(\mC)-m  \rho_\cc(\mC,J) \ge i\},
\end{eqnarray*}
where the last equality follows from (\ref{equalitydim}) and the definition of $\rho_\cc(\mC,J)$. 

Now assume that $n=m$. By~\cite[Theorem~4]{pazzis}, the anticodes in $\Mat$ are the spaces of the form
$\Mat(J^\perp, \cc)$ or $\Mat(J^\perp, \rr)$, as $J$ ranges over the subspaces of $\F_q^n.$ 
Then \begin{eqnarray*}
a_i(\mC)&=& \frac{1}{n} \min\left\{\dim(\Mat(J^\perp,\cc)) \mid J \subseteq \F_q^n, \ \dim(\mC \cap \Mat(J^\perp,\cc)) \ge i\right\}\cup\\
&&\qquad \qquad \qquad \cup \left\{\dim(\Mat(J^\perp,\rr)) \mid J\subseteq \F_q^n, \ \dim(\mC \cap \Mat(J^\perp,\rr)) \ge i\right\}\\
&=& \min\{a_i(\mC,\cc), \ a_i(\mC,\rr)\},
\end{eqnarray*}
where the last equality follows from (\ref{equalitydim}) and the definition of $\rho_\cc(\mC,J), \rho_\rr(\mC,J)$.
\end{proof}

In the next theorem, we prove that the property of being an optimal anticode is captured by the rank function of the associated $q$-polymatroids. 

\begin{theorem} \label{rhoanti}
Let $\mathcal{C} \subseteq \Mat$ be a rank-metric code and let $t=\mbox{maxrk}(\mC)$. The following are equivalent:
\begin{enumerate}
\item $\mC$ is an optimal anticode,
\item $\left\{\rho_\cc(\mC,J)\mid J \subseteq \F_q^n\right\}=\left\{0,1,\ldots,t\right\}$, or $\left\{\rho_\rr(\mC,J)\mid J \subseteq \F_q^n\right\}=\left\{0,1,\ldots,t\right\}$ and $m=n$,
\item $\rho_\cc(\mC,\F_q^n)=t$, or $\rho_\rr(\mC,\F_q^n)=t$ and $m=n$.
\end{enumerate}
In particular, the $q$-polymatroid associated to an optimal anticode is a $q$-matroid.
\end{theorem}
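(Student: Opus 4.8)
The plan is to prove the cycle $(1)\Rightarrow(2)\Rightarrow(3)\Rightarrow(1)$, observing that $(1)\Leftrightarrow(3)$ is essentially immediate and that the only substantive content sits in $(1)\Rightarrow(2)$, where the classification of optimal anticodes enters. I would begin by recording the elementary identity $\rho_\cc(\mC,\F_q^n)=\dim(\mC)/m$, which holds because $(\F_q^n)^\perp=\{0\}$ and $\mC(\{0\},\cc)=\{0\}$, and symmetrically $\rho_\rr(\mC,\F_q^m)=\dim(\mC)/n$. By definition $\mC$ is an optimal anticode iff $\dim(\mC)=m\cdot t$, i.e. iff $\rho_\cc(\mC,\F_q^n)=t$; when $m=n$ this is equivalent to $\rho_\rr(\mC,\F_q^n)=t$ as well since then $\dim(\mC)/m=\dim(\mC)/n$. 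This yields $(1)\Leftrightarrow(3)$ at once, with the column/row alternatives of (3) matching the two regimes $m>n$ and $m=n$.

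For $(2)\Rightarrow(3)$ I would use monotonicity (P2): since every subspace of $\F_q^n$ is contained in $\F_q^n$, we have $\rho_\cc(\mC,\F_q^n)=\max_{J\subseteq\F_q^n}\rho_\cc(\mC,J)$, so if the image of $\rho_\cc(\mC,\cdot)$ equals $\{0,1,\ldots,t\}$ then $\rho_\cc(\mC,\F_q^n)=t$; the row case is identical. Thus whichever alternative of (2) holds, the corresponding alternative of (3) follows.

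The main step is $(1)\Rightarrow(2)$. Assume $\mC$ is an optimal anticode with $\mbox{maxrk}(\mC)=t$. By~\cite[Theorems~4 and~6]{pazzis}, either $\mC=\Mat(J_0,\cc)$ for some subspace $J_0\subseteq\F_q^n$, or $m=n$ and $\mC=\Mat(J_0,\rr)$ for some $J_0\subseteq\F_q^n$; since $\dim(J_0)\le n\le m$, in either case $\dim(J_0)=\mbox{maxrk}(\mC)=t$. I treat the first case, the second being identical after transposition. For any $J\subseteq\F_q^n$, Lemma~\ref{03-22-18} gives $\mC(J^\perp,\cc)=\Mat(J_0,\cc)\cap\Mat(J^\perp,\cc)=\Mat(J_0\cap J^\perp,\cc)$, of dimension $m\dim(J_0\cap J^\perp)$; combined with $\dim(\mC)=mt$ this yields
$$\rho_\cc(\mC,J)=t-\dim(J_0\cap J^\perp).$$
As $J$ ranges over all subspaces of $\F_q^n$, so does $J^\perp$, and every subspace $W\subseteq J_0$ arises as $J_0\cap J^\perp$ (take $J^\perp=W$); hence $\dim(J_0\cap J^\perp)$ attains each value in $\{0,1,\ldots,t\}$, and so $\{\rho_\cc(\mC,J)\mid J\subseteq\F_q^n\}=\{0,1,\ldots,t\}$, the first alternative of (2).

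Finally, the displayed formula shows that in this case $\rho_\cc(\mC,\cdot)$ is integer-valued, so $P(\mC,\cc)$ is a $q$-matroid (and symmetrically $P(\mC,\rr)$ is a $q$-matroid when $\mC$ is a row-anticode with $m=n$), which gives the closing assertion. I do not expect a genuine obstacle: the external inputs are just the description of optimal anticodes in $\Mat$ from~\cite{pazzis} and the intersection identity for $\Mat(\cdot,\cc)$ from Lemma~\ref{03-22-18}. The one point requiring care is bookkeeping of the column/row dichotomy in the case $m=n$, so that the two alternatives appearing in (2) and (3) are paired up with the correct type of anticode.
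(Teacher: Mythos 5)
Your proposal is correct and follows essentially the same route as the paper: the substantive step $(1)\Rightarrow(2)$ uses the de Seguins Pazzis classification of optimal anticodes together with the intersection identity for $\Mat(\cdot,\cc)$ to get $\rho_\cc(\mC,J)=t-\dim(J_0\cap J^\perp)$, exactly as in the paper, and the remaining implications reduce to the identity $\rho_\cc(\mC,\F_q^n)=\dim(\mC)/m$ and monotonicity. The only cosmetic difference is that you close the cycle via an explicit $(2)\Rightarrow(3)$, which the paper leaves implicit.
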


\begin{proof}
$(1)\Rightarrow(2)$ By~\cite[Theorems~4 and~6]{pazzis},
either $\mC=\Mat(K,\cc)$ for a $t$-dimensional subspace $K \subseteq \F_q^n$, or $\mC=\Mat(K,\rr)$ for a $t$-dimensional subspace $K \subseteq \F_q^m$, where the latter is only possible if $m=n$. We assume that $\mC=\Mat(K,\cc)$, as the proof in the other situation is analogous. One has, for all $J \subseteq \F_q^n$,
$$\rho_\cc(\mC,J) = (mt - \dim(\Mat(K,\cc)\cap \Mat(J^\perp,\cc))/m = t-\dim(K \cap J^\perp),$$
where the second equality follows from Lemma~\ref{03-22-18} and~\cite[Lemma~26]{RR15}.
Hence we obtain 
$$\left\{\rho_\cc(\mC,J)\mid J \subseteq \F_q^n\right\}=\left\{0,1,\ldots,t\right\} \mbox{ if } \mC=\Mat(K,\cc),$$ 
$$\left\{\rho_\rr(\mC,J)\mid J \subseteq \F_q^n\right\}=\left\{0,1,\ldots,t\right\}\mbox{ if }\mC=\Mat(K,\rr).$$

\noindent $(3)\Rightarrow(1)$ We have $$\rho_\cc(\mC,\F_q^n)=\dim(\mC)/m \mbox{ and } \max\{\rho_\rr(\mC,K)\mid K\subseteq\F_q^m\}=\rho_\rr(\mC,\F_q^m)=\dim(\mC)/n.$$
Then either $\dim(\mC)/m$, or $\dim(\mC)/n=t$ and $m=n$. Either way one has $\dim(\mC)=mt$, hence $\mC$ is an optimal anticode.
\end{proof}

\begin{corollary}\label{formulaoptimalanticode}
Let $\mathcal{C} \subseteq \Mat$ be an optimal anticode and let $t=\mbox{maxrk}(\mC)$. If $m>n$, then $P(\mC,\cc)\sim(\F_q^n,\rho)$ where 
\begin{equation}\label{matroidanticodes}
\rho(J)=\dim(J+\langle e_1,\ldots,e_{n-t}\rangle)-(n-t)
\end{equation} 
and $e_i$ denotes the $i$-th vector of the standard basis of $\F_q^n$.
If $m=n$, then either $P(\mC,\cc)\sim(\F_q^n,\rho)$ or $P(\mC,\rr)\sim(\F_q^n,\rho)$.
\end{corollary}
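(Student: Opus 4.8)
The plan is to reduce the statement to the explicit description of the rank function $\rho_\cc$ obtained inside the proof of Theorem~\ref{rhoanti}, namely that if $\mC = \Mat(K,\cc)$ for a $t$-dimensional subspace $K \subseteq \F_q^n$, then $\rho_\cc(\mC,J) = t - \dim(K \cap J^\perp)$ for all $J \subseteq \F_q^n$, and symmetrically in the row case. So first I would invoke~\cite[Theorems~4 and~6]{pazzis} to say that an optimal anticode $\mC$ of maximum rank $t$ is of the form $\Mat(K,\cc)$ with $\dim(K)=t$ (always), or $\Mat(K,\rr)$ with $\dim(K)=t$ (only when $m=n$); in the latter case one works with $\rho_\rr$ in place of $\rho_\cc$, and the argument is identical, which handles the $m=n$ alternative at the end. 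So assume $\mC=\Mat(K,\cc)$.

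Next I would rewrite $t - \dim(K\cap J^\perp)$ in the shape of the claimed formula~(\ref{matroidanticodes}). The key identity is the modular/dimension relation $\dim(K \cap J^\perp) = \dim(K) + \dim(J^\perp) - \dim(K + J^\perp) = t + (n-\dim J) - \dim(K+J^\perp)$, so $\rho_\cc(\mC,J) = \dim(K+J^\perp) - (n - \dim J)$. Now pass to orthogonal complements: $\dim(K+J^\perp) = n - \dim(K^\perp \cap J)$, hence $\rho_\cc(\mC,J) = \dim(J) - \dim(K^\perp \cap J) = \dim\big(J/(K^\perp\cap J)\big)$, which one recognizes as $\dim(J + K^\perp) - \dim(K^\perp) = \dim(J + K^\perp) - (n-t)$. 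Since $K^\perp$ is an $(n-t)$-dimensional subspace of $\F_q^n$, choosing an $\F_q$-linear isomorphism $\varphi:\F_q^n \to \F_q^n$ that carries $K^\perp$ onto $\langle e_1,\ldots,e_{n-t}\rangle$ and setting $\rho(J) := \dim(J + \langle e_1,\ldots,e_{n-t}\rangle) - (n-t)$ gives $\rho_\cc(\mC,\varphi^{-1}(J)) = \rho(J)$, i.e. $P(\mC,\cc)\sim (\F_q^n,\rho)$ in the sense of Definition~\ref{defequipolym}. One should also remark briefly that $(\F_q^n,\rho)$ is indeed a $q$-polymatroid — but this is automatic, since it is equivalent to $P(\mC,\cc)$, which is a $q$-polymatroid by Theorem~\ref{givesrise} (in fact a $q$-matroid by Theorem~\ref{rhoanti}).

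I expect the only genuinely delicate point to be bookkeeping with orthogonal complements: one must be careful that $J \mapsto J^\perp$ is an inclusion-reversing bijection and that $(K+J^\perp)^\perp = K^\perp \cap J$, so that each dimension count lands on the right side. There is also a minor subtlety in that the formula in the statement is only asserted up to $q$-polymatroid equivalence, not on the nose; this is exactly what lets us replace the arbitrary $(n-t)$-dimensional subspace $K^\perp$ by the coordinate subspace $\langle e_1,\ldots,e_{n-t}\rangle$ via $\varphi$, and it is the reason no choice of basis for $\F_{q^m}$ or normalization of $K$ is needed. Once these substitutions are done the computation is routine, and the $m=n$ clause follows verbatim by replacing "column" with "row" throughout and appealing to the row-version of the identity from the proof of Theorem~\ref{rhoanti}.
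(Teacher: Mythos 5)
Your proposal is correct and follows essentially the same route as the paper: classify the optimal anticode as $\Mat(K,\cc)$ (or $\Mat(K,\rr)$ when $m=n$) via de Seguins Pazzis, quote the identity $\rho_\cc(\mC,J)=t-\dim(K\cap J^\perp)$ from the proof of Theorem~\ref{rhoanti}, and convert it by an orthogonal-complement dimension count into $\dim(J+K^\perp)-(n-t)$. The only (immaterial) difference is that the paper normalizes $K=\langle e_{n-t+1},\dots,e_n\rangle$ up front using code equivalence and Proposition~\ref{equivpoly}, whereas you keep $K$ general and supply the equivalence $\varphi$ at the end.
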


\begin{proof}
If $m>n$, then $\mC=\Mat(K,\cc)$ for some $K\subseteq\F_q^n$ of $\dim(K)=t$. If $m=n$, then either $\mC=\Mat(K,\cc)$ or $\mC^t=\Mat(K,\cc)$, for some $K\subseteq\F_q^n$ of $\dim(K)=t$. Since $P(\mC^t,\cc)=P(\mC,\rr)$, it suffices to consider the case when $\mC=\Mat(K,\cc)$. Up to code equivalence, we may also assume without loss of generality that $K=\langle e_{n-t+1},\ldots,e_n\rangle$.

It follows from the proof of Theorem~\ref{rhoanti} that $\rho_\cc(\mC,J)=t-\dim(K \cap J^\perp)$. Therefore
$\rho_\cc(\mC,J)=t-(n-dim(\langle e_{n-t+1},\ldots,e_n\rangle \cap J^\perp)^\perp)=\dim(J+\langle e_1,\ldots,e_{n-t}\rangle)-(n-t).$\end{proof}

\begin{remark}
One consequence of our results is that, in certain cases, the generalized weights of a code determine the associated $q$-polymatroid $P(\mC,\cc)$ up to equivalence. 
This is the case e.g. in the following situations:\begin{itemize} 
\item if $\mC$ has the generalized weights of an MRD code, then $\mC$ is MRD and $P(\mC,\cc)$ is the uniform $q$-matroid (see Corollary~\ref{formulaMRD}),
\item if $\mC$ has the generalized weights of an optimal anticode, then $\mC$ is an optimal anticode and $P(\mC,\cc)$ is the $q$-matroid described in Corollary~\ref{formulaoptimalanticode},
\item if $\dim(\mC)=1$, then $\mC=\langle M\rangle$ and $a_1(\mC)=d_{\min}(\mC)=\rk(M)$. Moreover, $P(\mC,\cc)$ is given by $$\rho_\cc(\mC,J)=\left\{\begin{array}{ll}
0 & \mbox{if $\colsp(M)\subseteq J^\perp$,}\\
\frac{1}{m} & \mbox{else.}
\end{array}\right.$$
Notice that if $\mC_1=\langle M_1\rangle$ and $\mC_2=\langle M_2\rangle$ have the same minimum distance, then $P(\mC_1,\cc)\sim P(\mC_2,\cc)$. In fact $\rho_\cc(\mC_1,J)=\rho_\cc(\mC_2,\varphi(J))$, where $\varphi:\F_q^n\rightarrow\F_q^n$ is an $\F_q$-linear isomorphism such that $\varphi(\colsp(M_1))=\colsp(M_2)$.
\end{itemize}
One should however not expect this to be the case in general. In other words, the generalized weights of a rank-metric code $\mC$ are invariants of the associated $q$-polymatroid $P(\mC,\cc)$, but they do not determine it, as the next example shows. Similar examples may be found for rectangular matrices.
\end{remark}

\begin{example}
Let $\mC_1,\mC_2\subseteq\Mat_{2\times 2}(\F_2)$, 
$$\mC_1=\left\langle\begin{pmatrix} 1 & 0 \\ 0 & 1 \end{pmatrix}, \begin{pmatrix} 0 & 1 \\ 0 & 0 \end{pmatrix}\right\rangle, \;\;\;
\mC_2=\left\langle\begin{pmatrix} 0 & 1 \\ 1 & 0 \end{pmatrix}, \begin{pmatrix} 0 & 1 \\ 0 & 0 \end{pmatrix}\right\rangle.$$
The codes $\mC,\mD$ have generalized weights $a_1(\mC_1)=a_1(\mC_2)=1$ and $a_2(\mC_1)=a_2(\mC_2)=2$. In fact, any rank-metric code of dimension $2$ and minimum distance $1$ which is not an optimal anticode has the same generalized weights as $\mC_1$ and $\mC_2$.

Let $P(\mC_1,\cc)=(\F_2^2,\rho_1)$ and $P(\mC_2,\cc)=(\F_2^2,\rho_2)$. Let $J\subseteq\F_2^2$ be a 1-dimensional linear subspace. Then 
$$\rho_1(J)=\left\{\begin{array}{ll}
\frac{1}{2} & \mbox{if $J=\langle(0,1)\rangle$}\\
1 & \mbox{if $J=\langle(1,0)\rangle$ or $J=\langle(1,1)\rangle$}
\end{array}\right.$$
while
$$\rho_2(J)=\left\{\begin{array}{ll}
\frac{1}{2} & \mbox{if $J=\langle(0,1)\rangle$ or $J=\langle(1,0)\rangle$,}\\
1 & \mbox{if $J=\langle(1,1)\rangle$}
\end{array}\right.$$
Therefore $P(\mC_1,\cc)\not\sim P(\mC_2,\cc).$
Notice moreover that $P(\mC_1,\cc)\sim P(\mC_1,\rr)$ and $P(\mC_2,\cc)\sim P(\mC_2,\rr)$. 
\end{example}

\section{Duality}\label{secDuality}

In this last section of the paper we establish a connection between the notions of code duality and $q$-polymatroid duality.
We start by showing that the $q$-polymatroids associated to the dual code $\mC^\perp$ are the duals of the $q$-polymatroids associated to the 
original code $\mC$.

\begin{theorem} \label{mainduality}
Let $\mathcal{C} \subseteq \Mat$ be a rank-metric code. We have $P(\mathcal{C},\cc)^*=P(\mathcal{C}^\perp,\cc)$ and 
$P(\mathcal{C},\rr)^*=P(\mathcal{C}^\perp,\rr)$. 
\end{theorem}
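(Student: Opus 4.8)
The plan is to prove the first identity $P(\mathcal{C},\cc)^*=P(\mathcal{C}^\perp,\cc)$; the second follows by the same argument with rows in place of columns (and $n$ in place of $m$). By Definition~\ref{defdualmatroid}, the rank function of $P(\mathcal{C},\cc)^*$ is $\rho_\cc^*(J)=\dim(J)-\rho_\cc(\mathcal{C},\F_q^n)+\rho_\cc(\mathcal{C},J^\perp)$, while the rank function of $P(\mathcal{C}^\perp,\cc)$ is, by definition, $(\dim(\mathcal{C}^\perp)-\dim(\mathcal{C}^\perp(J^\perp,\cc)))/m$. So the whole theorem reduces to the single dimension identity
$$\dim(J)-\frac{\dim(\mathcal{C})}{m}+\frac{\dim(\mathcal{C})-\dim(\mathcal{C}(J,\cc))}{m}=\frac{\dim(\mathcal{C}^\perp)-\dim(\mathcal{C}^\perp(J^\perp,\cc))}{m},$$
valid for every subspace $J\subseteq\F_q^n$ (here I used $\rho_\cc(\mathcal{C},J^\perp)=(\dim\mathcal{C}-\dim\mathcal{C}((J^\perp)^\perp,\cc))/m=(\dim\mathcal{C}-\dim\mathcal{C}(J,\cc))/m$). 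Clearing $m$ and using $\dim(\mathcal{C})+\dim(\mathcal{C}^\perp)=nm$, this is equivalent to
$$m\dim(J)-\dim(\mathcal{C}(J,\cc))=nm-\dim(\mathcal{C})-\dim(\mathcal{C}^\perp(J^\perp,\cc))=\dim(\mathcal{C}^\perp)-\dim(\mathcal{C}^\perp(J^\perp,\cc)),$$
i.e.
\begin{equation}\label{eq:dualitykey}
\dim(\mathcal{C}(J,\cc))+\dim(\mathcal{C}^\perp(J^\perp,\cc))=m\dim(J)-nm+2\dim(\mathcal{C})+\cdots
\end{equation}
— rather than chase constants, the cleaner route is to compute $\dim(\mathcal{C}^\perp(J^\perp,\cc))$ directly in terms of $\mathcal{C}(J,\cc)$.

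The key step is therefore a perpendicularity lemma: for every subspace $J\subseteq\F_q^n$ one has the orthogonal decomposition
$$\mathcal{C}^\perp(J^\perp,\cc)=\bigl(\mathcal{C}+\Mat(J^\perp,\cc)\bigr)^{\perp}\quad\text{inside }\Mat,$$
or, what amounts to the same after dualizing, $\mathcal{C}^\perp \cap \Mat(J^\perp,\cc)=\bigl(\mathcal{C}+\Mat(J^\perp,\cc)^{\perp}\bigr)^{\perp}=\bigl(\mathcal{C}+\Mat(J,\cc)\bigr)^{\perp}$, using the elementary fact $\Mat(J^\perp,\cc)^\perp=\Mat(J,\cc)$ with respect to the trace form (columns supported in $J^\perp$ are trace-orthogonal to columns supported in $J$; this should be recorded as a one-line sublemma, analogous to Lemma~\ref{03-22-18}). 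Granting this, the trace form being non-degenerate gives
$$\dim\bigl(\mathcal{C}^\perp(J^\perp,\cc)\bigr)=nm-\dim\bigl(\mathcal{C}+\Mat(J,\cc)\bigr)=nm-\dim(\mathcal{C})-\dim\Mat(J,\cc)+\dim\bigl(\mathcal{C}\cap\Mat(J,\cc)\bigr),$$
and since $\mathcal{C}\cap\Mat(J,\cc)=\mathcal{C}(J,\cc)$ and $\dim\Mat(J,\cc)=m\dim(J)$ (this last from, e.g., \cite[Lemma~26]{RR15}), substituting into the definition of $\rho_\cc(\mathcal{C}^\perp,\cdot)$ and simplifying yields exactly $\rho_\cc^*(J)$. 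The computation closes.

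The main obstacle is purely the bookkeeping of the trace-orthogonality claim $\Mat(J^\perp,\cc)^{\perp}=\Mat(J,\cc)$ and the interaction of code-duality (the $\perp$ on subspaces of $\Mat$ via $\Tr(MN^t)$) with column-support: one must check that $M$ has column space in $J$ iff $\Tr(MN^t)=0$ for all $N$ with column space in $J^\perp$, which is a direct consequence of writing $\Tr(MN^t)$ as $\sum_j \langle M e_j, N e_j\rangle$ — wait, more precisely $\Tr(MN^t)=\sum_{i,j}M_{ij}N_{ij}$ pairs the $i$-th rows, so one should instead observe $\colsp(M)\subseteq J \iff$ every row of $M^t$... in fact the correct statement pairs column spaces with the constraint read row-wise; the safe formulation is to note $\Mat(J,\cc)=\{M : e^\perp\text{-conditions}\}$ and verify the orthogonal complement claim coordinatewise. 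Once that sublemma is pinned down cleanly, the rest is a routine dimension count, and no genuine difficulty remains beyond organizing the constant-chasing so that $nm=\dim(\mathcal{C})+\dim(\mathcal{C}^\perp)$ is used at the right moment.
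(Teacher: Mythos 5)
Your proof is correct, and its skeleton is the same as the paper's: both compute $\rho_\cc^*(\mC,J)=\dim(J)-\dim(\mC(J,\cc))/m$ directly from Definition~\ref{defdualmatroid}, and both then reduce the theorem to the single dimension identity $\dim(\mC^\perp(J^\perp,\cc))=nm-\dim(\mC)-m\dim(J)+\dim(\mC(J,\cc))$. The only real difference is how that identity is obtained: the paper cites it as \cite[Lemma~28]{RR15}, whereas you re-derive it from scratch via the trace-orthogonality $\Mat(J^\perp,\cc)^\perp=\Mat(J,\cc)$ combined with $(\mC+\Mat(J,\cc))^\perp=\mC^\perp\cap\Mat(J^\perp,\cc)$ and a dimension count; this makes your argument self-contained where the paper's is not. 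Two cosmetic points: your orthogonality sublemma is fine and your first justification was already the right one, since $\Tr(MN^t)=\sum_{i,j}M_{ij}N_{ij}=\sum_j\langle Me_j,Ne_j\rangle$ pairs corresponding \emph{columns}, so the subsequent hedging about rows versus columns can be deleted. Also, the first display of your ``key step'' asserts $\mC^\perp(J^\perp,\cc)=(\mC+\Mat(J^\perp,\cc))^\perp$, which is wrong as written (it should be $(\mC+\Mat(J,\cc))^\perp$); you correct this in the very next clause and the corrected form is the one you actually use, but the erroneous display and the abandoned half-equation with trailing dots should be removed in a final write-up.
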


\begin{proof}
We only show the result for $P(\mathcal{C},\cc)$. The proof for $P(\mathcal{C},\rr)$ is analogous.
Let $J\subseteq \F_q^n$ be a subspace.
Since $\rho_\cc(\mC,J) = (\dim(\mC) - \dim(\mC(J^\perp,\cc)))/m$, then
$$\rho^*_\cc(\mC,J)=\dim(J) - \dim (\mC)/m + (\dim (\mC)-\dim(\mC(J,\cc)))/m=\dim(J) - \dim(\mC(J,\cc))/m.$$
Therefore by~\cite[Lemma~28]{RR15} one has
\begin{equation*}
m\rho^*_\cc(\mC,J)-m\rho_\cc(\mC^\perp,J)=m\dim(J)-\dim(\mC^\perp)-\dim(\mC)+mn-m\dim(J)=0.\qedhere
\end{equation*}
\end{proof}

Finally, it is natural to ask how the $q$-polymatroids associated to the dual of a vector rank-metric code relate to the $q$-polymatroids associated to the original vector rank-metric code. 
It turns out that they are dual to each other, as the following result shows.

\begin{corollary}
Let $C \subseteq \F_{q^m}^n$ be a vector rank-metric code, and let $\Gamma$ be a basis of  $\F_{q^m}$ over $\F_q$.
We have
$$P(\Gamma(C^{\dbot}),\cc) = P(\Gamma^*(C),\cc)^*=P(\Gamma(C),\cc)^* \mbox{ and } P(\Gamma(C^{\dbot}),\rr) = P(\Gamma^*(C),\rr)^*\sim P(\Gamma(C),\rr)^*$$
where $\Gamma^*$ is the dual of the basis $\Gamma$.
\end{corollary}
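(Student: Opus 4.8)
The plan is to reduce everything to Theorem~\ref{mainduality} together with the known behavior of the matrix construction under duality of vector codes. First I would recall the key fact, essentially due to Delsarte (see e.g.~\cite{albgen}), that passing from a vector code to a matrix code via a basis interacts with duality through the \emph{dual basis}: namely $\Gamma(C^{\dbot}) = \Gamma^*(C)^\perp$, where $\Gamma^*$ is the basis of $\F_{q^m}$ over $\F_q$ dual to $\Gamma$ with respect to the trace form, and $(\cdot)^\perp$ on the right denotes the trace-dual of a matrix code in $\Mat$. (If this identity is not quoted verbatim earlier, it follows from a direct computation with the trace bilinear form, comparing $\Tr(\Gamma(v)\Gamma^*(w)^t)$ with $\mathrm{Tr}_{\F_{q^m}/\F_q}\langle v,w\rangle$.) Granting this, the chain of equalities for the $\cc$-polymatroid is immediate:
$$P(\Gamma(C^{\dbot}),\cc) = P(\Gamma^*(C)^\perp,\cc) = P(\Gamma^*(C),\cc)^*,$$
where the last step is exactly Theorem~\ref{mainduality} applied to the matrix code $\Gamma^*(C)$.

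The second equality $P(\Gamma^*(C),\cc)^* = P(\Gamma(C),\cc)^*$ then follows from Proposition~\ref{propequal}: that result states $P(\Gamma(C),\cc) = P(\Gamma'(C),\cc)$ for \emph{any} two $\F_q$-bases $\Gamma,\Gamma'$ of $\F_{q^m}$, in particular for $\Gamma$ and $\Gamma^*$, so $P(\Gamma^*(C),\cc) = P(\Gamma(C),\cc)$ and one may dualize both sides. For the row version, the argument is the same up to one weakening: Theorem~\ref{mainduality} also gives $P(\Gamma(C^{\dbot}),\rr) = P(\Gamma^*(C)^\perp,\rr) = P(\Gamma^*(C),\rr)^*$, but Proposition~\ref{propequal} only guarantees $P(\Gamma(C),\rr) \sim P(\Gamma'(C),\rr)$ (equivalence, not equality), so combined with Proposition~\ref{passes} (duality preserves $q$-polymatroid equivalence) we get $P(\Gamma^*(C),\rr)^* \sim P(\Gamma(C),\rr)^*$, which is exactly the $\sim$ in the statement. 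That accounts for why the displayed formula has $=$ on the column side and $\sim$ on the row side.

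I expect the only genuine obstacle to be pinning down the identity $\Gamma(C^{\dbot}) = \Gamma^*(C)^\perp$ with the correct placement of the dual basis; everything else is a bookkeeping assembly of Theorem~\ref{mainduality}, Proposition~\ref{propequal}, and Proposition~\ref{passes}. If that identity is available as a cited lemma (it is the natural $q$-analogue of the statement that a dual code's generator matrix over $\F_q$ is obtained via the dual basis), the proof is essentially two lines per polymatroid; if not, I would insert a short computation: for $v \in C$, $w \in C^{\dbot}$ one has $\sum_i v_i w_i = 0$ in $\F_{q^m}$, hence $\mathrm{Tr}_{\F_{q^m}/\F_q}\!\left(\sum_i v_i w_i\right)=0$; expanding $v_i = \sum_j \Gamma_{ij}(v)\gamma_j$ and $w_i = \sum_k \Gamma^*_{ik}(w)\gamma^*_k$ and using $\mathrm{Tr}_{\F_{q^m}/\F_q}(\gamma_j\gamma^*_k)=\delta_{jk}$ collapses this to $\Tr(\Gamma(v)\,\Gamma^*(w)^t)=0$, giving $\Gamma^*(C^{\dbot}) \subseteq \Gamma(C)^\perp$; a dimension count via Proposition~\ref{lift} forces equality, and relabeling $\Gamma \leftrightarrow \Gamma^*$ (using $(\Gamma^*)^* = \Gamma$) yields the form used above.
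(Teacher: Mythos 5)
Your proposal is correct and follows essentially the same route as the paper: the paper also starts from the identity $\Gamma(C^{\dbot})=\Gamma^*(C)^\perp$ (quoted there as Theorem~21 of~\cite{RR15}), applies Theorem~\ref{mainduality} to $\Gamma^*(C)$, and then uses Proposition~\ref{propequal} together with Proposition~\ref{passes} to replace $\Gamma^*$ by $\Gamma$, with equality on the column side and only equivalence on the row side. Your fallback trace computation for the key identity is also sound, so nothing is missing.
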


\begin{proof}
Applying~\cite[Theorem~21]{RR15} to $C$ we obtain $\Gamma(C^{\dbot})=\Gamma^*(C)^\perp$, hence $P(\Gamma(C^{\dbot}),\cc) = P(\Gamma^*(C)^\perp,\cc)$ and $P(\Gamma(C^{\dbot}),\rr) = P(\Gamma^*(C)^\perp,\rr)$.  
On the other hand, Theorem~\ref{mainduality} gives $P(\Gamma^*(C)^\perp,\cc)=P(\Gamma^*(C),\cc)^*$ and $P(\Gamma^*(C)^\perp,\rr)=P(\Gamma^*(C),\rr)^*$. 
By Proposition~\ref{propequal} we have $P(\Gamma^*(C),\cc)=P(\Gamma(C),\cc)$ and $P(\Gamma^*(C),\rr)\sim P(\Gamma(C),\rr)$.
Therefore by Proposition~\ref{passes} it follows that $P(\Gamma^*(C),\cc)^*=P(\Gamma(C),\cc)^*$ and $P(\Gamma^*(C),\rr)^*\sim P(\Gamma(C),\rr)^*$. 
\end{proof}

%
%
%
%


\begin{thebibliography}{99}

\bibitem{Berger} T.P. Berger,
\emph{Isometries for rank distance and permutation group of Gabidulin codes},
IEEE Transactions on Information Theory, {\bf 49} (2002), no. 11, 3016--3019.

\bibitem{byra} E. Byrne and A. Ravagnani, 
\emph{Covering radius of matrix codes endowed with the rank metric}, 
SIAM Journal on Discrete Mathematics, {\bf 31} (2017), no. 2, 927--944.

\bibitem{Cr64} H. Crapo,
\emph{On the theory of combinatorial independence},
Ph.D. Thesis, Massachusetts Institute of Technology, Dept. of Mathematics (1964).

\bibitem{del1} P. Delsarte,
\emph{Bilinear forms over a finite field, with applications to coding theory},
Journal of Combinatorial Theory, Series A, {\bf 25} (1978), no. 3, 226--241.

\bibitem{gabid} E. Gabidulin,
\emph{Theory of codes with maximum rank distance},
Problems of Information Transmission, {\bf 1} (1985), no. 2, 1--12.

\bibitem{costch} E. Gorla, A. Ravagnani,
\emph{Codes endowed with the rank metric}, in
``Random Network Coding and Designs Over $GF(q)$'',
Eds. M. Greferath, M. Pavcevic, A. Vazquez-Castro, N. Silberstein,
Signals and Communication Technology, Springer-Verlag Berlin (2018).

\bibitem{hua} L.-K. Hua, 
\emph{A theorem on matrices over a sfield and its applications},
Acta Mathematica Sinica, {\bf 1} (1951), 109--163.

\bibitem{Ondef} R. Jurrius, R. Pellikaan,
\emph{On defining generalized rank weights}, 
Advances in Mathematics of Communications, {\bf 11} (2017), no. 1, 225--235.

\bibitem{JP16} R. Jurrius, R. Pellikaan,
\emph{Defining the $q$-analogue of a matroid},
Electronic Journal of Combinatorics, {\bf 25} (2018), no. 3, 1--32.

\bibitem{U6} U. Mart{\'\i}nez-Pe{\~n}as, R. Matsumoto, 
\emph{Relative generalized matrix weights of matrix codes for universal security on wire-tap networks},
IEEE Transactions on Information Theory, {\bf 64} (2018), no. 4, 2529--2549.

\bibitem{Klove} T. Kl{\o}ve,
\emph{The weight distribution of linear codes over $GF(q^l)$ having generator matrix over $GF(q)$},
Discrete Mathematics, {\bf 23} (1978), 159--168.

\bibitem{KMU} J. Kurihara, R. Matsumoto, T. Uyematsu,
\emph{Relative generalized rank weight of linear codes and its applications to network coding},
IEEE Transactions on Information Theory, {\bf 61} (2015), no. 7, 3912--3936.

\bibitem{OS} F. Oggier, A. Sboui,
\emph{On the existence of generalized rank weights}, in 
``IEEE ISIT-2012, International Symposium on Information Theory'' (2012), 406--410.

\bibitem{Ox11} J. Oxley,
\emph{Matroid Theory},
Oxford Graduate Texts in Mathematics, Second Edition (2011).

\bibitem{OW93} J. Oxley, G. Whittle,
\emph{A characterization of Tutte invariants of $2$-polymatroids},
Journal of Combinatorial Theory, Series B, {\bf 59} (1993), no. 2, 210--244.

\bibitem{RR15} A. Ravagnani,
\emph{Rank-metric codes and their duality theory}.
Designs, Codes and Cryptography,  {\bf 80} (2016), no. 1, 197--216.

\bibitem{albgen} A. Ravagnani,
\emph{Generalized weights: an anticode approach},
Journal of Pure and Applied Algebra, {\bf 220} (2016), no. 5, 1946--1962.

\bibitem{alblatt} A. Ravagnani, 
\emph{Duality of Codes Supported on Regular Lattices, with an Application to Enumerative Combinatorics},
Designs, Codes and Cryptography, {\bf 86} (2018), no.9, 2035--2063.

\bibitem{roth} R. M. Roth, 
\emph{Maximum-Rank Array Codes and their Application to Criss-cross Error Correction},
IEEE Transactions on Information Theory, {\bf 37} (1991), no. 2, 328--336.

\bibitem{pazzis} C. de Seguins Pazzis,
\emph{The classification of large spaces of matrices with bounded rank},
Israel Journal of Mathematics, {\bf 208} (2015), no. 1, 219--259.

\bibitem{Sh18} K. Shiromoto,
\emph{Codes with the rank metric and matroids},
to appear in Designs, Codes and Cryptography.

\bibitem{wan} Z.-X. Wan, 
\emph{A proof of the automorphisms of linear groups over a sfield of characteristic 2},
Scientia Sinica {\bf 11} (1962), 1183--1194.
 
\bibitem{wan2} Z.-X. Wan, 
\emph{Geometry of matrices. In memory of Professor L. K. Hua (1910--1985)}, 
World Scientific, Singapore (1996).

\bibitem{Wei} V. Wei,
\emph{Generalized hamming weights for linear codes},
IEEE Transactions on Information Theory, {\bf 37} (1991), no. 5, 1412--1418.

\bibitem{We76} D. Welsh,
\emph{Matroid theory},
Dover Publications, INC. Mineola New York (1976).

\end{thebibliography}
\end{document}